\theoremstyle{plain}
 \numberwithin{mythm}{section}
\DeclareMathAlphabet\scr{U}{scr}{m}{n}
\SetMathAlphabet\scr{bold}{U}{scr}{b}{n}
  \DeclareFontFamily{U}{scr}{\skewchar\font'177}%
  \DeclareFontShape{U}{scr}{m}{n}{<-6>rsfs5<6-8>rsfs7<8->rsfs10}{}%
  \DeclareFontShape{U}{scr}{b}{n}{<-6>rsfs5<6-8>rsfs7<8->rsfs10}{}%
\numberwithin{equation}{section}
\DeclareMathAlphabet\scr{U}{scr}{m}{n}
\SetMathAlphabet\scr{bold}{U}{scr}{b}{n}
  \DeclareFontFamily{U}{scr}{\skewchar\font'177}%
  \DeclareFontShape{U}{scr}{m}{n}{<-6>rsfs5<6-8>rsfs7<8->rsfs10}{}%
  \DeclareFontShape{U}{scr}{b}{n}{<-6>rsfs5<6-8>rsfs7<8->rsfs10}{}%
\newcommand{\E}{\scr E}
\newcommand{\zu}{]\!]}
\newcommand{\eps}{{\varepsilon}}
\newcommand{\beao}{\begin{eqnarray*}}
\newcommand{\eeao}{\end{eqnarray*}\noindent}
\newcommand{\beam}{\begin{eqnarray}}
\newcommand{\eeam}{\end{eqnarray}\noindent}
\newcommand{\ov}{\overline}
\newcommand{\un}{\underline}
\newcommand{\wh}{\widehat}
\newcommand{\wt}{\widetilde}
\newcommand{\vt}{{\vartheta}}
\def\bbr{{\Bbb R}}   
\def\bbn{{\Bbb N}}
\newtheorem{Satz}{Theorem}[section]
\newtheorem{Lemma}[Satz]{Lemma}
\newtheorem{Definition}[Satz]{Definition}
\newtheorem{Bemerkung}[Satz]{Remark}
\newtheorem{Annahme}[Satz]{Assumption}
\numberwithin{equation}{section}
\begin{document}

\title{Optimal Liquidity Provision\footnote{The authors thank Pierre Collin-Dufresne, Paolo Guasoni, Jan Kallsen, Ren Liu, Mathieu Rosenbaum, and Torsten Sch\"oneborn 
for fruitful discussions and two anonymous referees for valuable comments.}}
\author{
Christoph K\"uhn\thanks{Goethe-Universit\"at Frankfurt, Institut f\"ur Mathematik, D-60054 Frankfurt a.M., Germany, e-mail: \texttt{ckuehn@math.uni-frankfurt.de}.
Financial support by the Deutsche Forschungsgemeinschaft~(DFG), Project ``Optimal Portfolios in Illiquid Financial Markets and in Limit Order Markets'',  
is gratefully acknowledged.}
\and
Johannes Muhle-Karbe\thanks{ETH Z\"urich, Departement f\"ur Mathematik, R\"amistrasse 101, CH-8092, Z\"urich, Switzerland, and Swiss Finance Institute, email \texttt{johannes.muhle-karbe@math.ethz.ch}. Partially supported by the National Centre of Competence in Research ``Financial Valuation and Risk Management'' (NCCR FINRISK), Project D1 (Mathematical Methods in Financial Risk Management), of the Swiss National Science Foundation (SNF).}
}
\maketitle

\begin{abstract}
A small investor provides liquidity at the best bid and ask prices of a limit order market. For small spreads and frequent orders of other market participants, we explicitly determine 
the investor's optimal policy and welfare. In doing so, we allow for general dynamics of the mid price, the spread, and the order flow, as well as for arbitrary preferences of the 
liquidity provider under consideration.
\end{abstract}

\bigskip
\noindent\textbf{Mathematics Subject Classification: (2010)} 91B60, 91G10, 91G80.

\bigskip
\noindent\textbf{JEL Classification:} G11.

\bigskip
\noindent\textbf{Keywords:} Limit order markets, optimal liquidity provision, asymptotics.

\section{Introduction}

Trades on financial markets are instigated by various motives. For example, mutual funds rebalance their portfolios, derivative positions are hedged, and margin calls may necessitate the liquidation of large asset positions. Such trades require counterparties who provide the necessary liquidity to the market. Traditionally, this market making role was played by designated ``specialists'', who agreed on contractual terms to match incoming orders in exchange for earning the spread between their bid and ask prices. As stock markets have become automated, this quasi-monopolistic setup has given way to \emph{limit order markets} on many trading venues. Here, electronic \emph{limit order books} collect all incoming orders, and automatically pair matching buy and sell trades. Such limit order markets allow virtually all market participants to engage in systematic liquidity provision, which has become a popular algorithmic trading strategy for hedge funds. 

The present study analyzes optimal strategies for liquidity provision and their performance. 
In contrast to most previous work on market making, we do \emph{not} consider a 
single \emph{large} monopolistic specialist (e.g., \cite{garman.76,amihud.mendelson.80,ho.stoll.81,avallaneda.stoikov.08,gueant.al.13}) who optimally sets the bid-ask spread. 
Instead, as in \cite{kuehn.stroh.10,cartea.jaimungal.13,gueant.al.13,guilbaud.pham.13,pham.fodra.13}, we focus on a \emph{small} liquidity provider who chooses how much liquidity to provide 
by placing limit buy and sell orders at exogenously given bid and ask prices, respectively. For tractability, we assume that limit orders of the liquidity provider are fully executed against any incoming market order, and, by the above choice of the limit prices, her orders enjoy 
priority over limit orders submitted by other market participants. Thereby, we abstract from incentives to place orders at different limit prices, which
leads to an enormous dimensionality reduction of the strategy space that has to be considered. To wit, we do not have to model the whole order order book. Instead, our model is fully specified by the 
bid-ask price processes and the arrival times of market orders of other market participants. We assume that the mid-price of the risky asset follows a martingale and 
consider the practically relevant limiting regime of small spreads and frequent orders of other market participants. Thereby, we obtain explicit formulas in a general setting 
allowing for arbitrary dynamics of the mid price, the spread, and the order flow, as well as for general preferences of the liquidity provider under 
consideration.\footnote{Related results for models with small trading costs have recently been determined 
by \cite{rosenbaum.tankov.13,martin.12,soner.touzi.13,kallsen.muhlekarbe.13a,kallsen.muhlekarbe.13b}. 
These correspond to optimal trading strategies for liquidity takers, whose demand is matched by liquidity providers such as the ones considered here.}

Given the liquidity provider's risk aversion, the asset's volatility, and the arrival rates of exogenous orders, the model tells us how much liquidity to provide by placing limit orders.
However, our model abstracts from the precise microstructure of order books, in particular from the finite price grid and the use of information about order volumes
in the book. In this spirit, we work with diffusion processes that are more tractable than integer-valued jump processes.
Ignoring volume effects, our model carries the flavor of the standard frictionless market model and models with proportional transaction costs.
Consequently, the model does not answer the question whether to place, say, the limit buy order of optimal size {\em exactly} at the current best bid price
or possibly one tick above/below it.

In this setting, 
the optimal policy is determined by an upper and lower boundary for the monetary position in the risky asset, to which the investor trades whenever an exogenous market order of another 
market participant arrives. Hence, these target positions determine the amount of liquidity the investor posts in the limit order book. K\"uhn and Stroh \cite{kuehn.stroh.10} characterize 
these boundaries by the solution of a free boundary problem for a log-investor with unit risk aversion, who only keeps long positions in a market with constant order flow and 
bid-ask prices following geometric Brownian motion with positive drift. In the present study, we show in a general setting with a martingale mid price that -- in the limit for small 
spreads and frequent orders 
of other market participants --  the upper and lower target positions are given explicitly by
\begin{equation}\label{eq:policy}
\overline{\beta}_t= \frac{2\varepsilon_t \alpha^{(2)}_t}{\mathrm{ARA}(x_0)\sigma^2_t}, \quad \underline{\beta}_t= -\frac{2\varepsilon_t \alpha^{(1)}_t }{\mathrm{ARA}(x_0)\sigma^2_t}.
\end{equation}
In these formulas, $2\varepsilon_t$ is the width of the relative bid-ask spread, $\alpha^{(1)}_t$ and $\alpha^{(2)}_t$ are the arrival rates of market sell and buy orders of other market
participants, $\sigma_t$ is the volatility of the risky asset returns, and $\mathrm{ARA}(x_0)$ is the absolute risk aversion of the investor at her initial position $x_0$. To wit, the
optimal amount of liquidity provided is inversely proportional to the inventory risk caused by the asset's local variance, scaled by the investor's risk aversion. Conversely, liquidity
provision is proportional to the compensation per trade (i.e., the relative spread~$2\varepsilon_t$), and the arrival rates $\alpha^{(1)}_t$ respectively $\alpha^{(2)}_t$. The product of these two terms plays the role of the risky asset's expected returns in the usual Merton position, in that it describes the investor's average revenues per unit time, that are traded off against her risk aversion and the variance of the asset returns to determine the optimal target position. Here, however, revenues are derived by netting other traders' buy and sell orders, unlike for the classical Merton problem, where they are generated by participating in trends of the risky asset. Note that the above policy is myopic, in that it only depends on the local dynamics of the model; future variations are not taken into account at the leading order.

The performance of the above strategy can also be quantified. At the leading order, its certainty equivalent is given by
\begin{equation}\label{eq:welfare}
x_0+\frac{\mathrm{ARA}(x_0)}{2}E\left[\int_0^T  (\overline{\beta}^2_t 1_{A^{(1)}_t} + \underline{\beta}^2_t 1_{A^{(2)}_t} )\sigma_t^2 dt \right],
\end{equation}
where $\omega\in A^{(1)}_t$ if the investor's last trade before time $t$ was a purchase so that her position is close to the upper boundary $\overline{\beta}_t$. Likewise, $\omega\in A^{(2)}_t$ if the investor's position is close to the lower boundary $\underline{\beta}_t$ after her last trade was a sale of the risky asset. Hence, the certainty equivalent of providing liquidity in the limit order market is given by the average (with respect to states and business time $\sigma^2_t dt$) of future squared target positions, rescaled by risk aversion.\footnote{This is in direct analogy to the results for models with proportional transaction costs \cite[Equation (3.4)]{kallsen.muhlekarbe.13b}; since the mid price follows a martingale in our model, the marginal pricing measure coincides with the physical probability here.} If all model parameters are constant, the above formula simplifies to
$$
x_0+\frac{(2\varepsilon \alpha^{(1)})(2\varepsilon \alpha^{(2)})}{2\mathrm{ARA}(x_0)\sigma^2}T.
$$
In this case, liquidity provision is therefore equivalent at the leading order to an annuity proportional to the ``drift rates'' $2\varepsilon \alpha^i$ of the investor's revenues from
purchases respectively sales, divided by two times the investor's risk aversion, times the risky asset's variance. In the symmetric case $\alpha^{(1)}=\alpha^{(2)}=\alpha$, this is in direct analogy to
the corresponding result for the classical Merton problem in the Black-Scholes model, where the equivalent annuity is given by the squared Sharpe ratio divided by two times the investor's
risk aversion. For a given total order flow $\alpha^{(1)}+\alpha^{(2)}$, asymmetries $\alpha^{(1)} \neq \alpha^{(2)}$ reduce liquidity providers' certainty equivalents, since they reduce the opportunities to earn the spread with little inventory risk by netting successive buy and sell trades. 

Our model is an overly optimistic playground for liquidity providers. These do not incur monitoring costs 
and always achieve full execution of their limit orders without having to 
further narrow the spread.\footnote{Partial execution of limit orders is studied by Guilbaud and Pham \cite{guilbaud.pham.13b}.  A model where liquidity providers have to narrow the 
spread by a discrete ``tick'' to gain execution priority is analyzed in \cite{guilbaud.pham.13}.}   
Submission and deletion of orders is free.
Moreover, since market orders of other market participants do not move the current best bid or ask prices, they earn the full spread between alternating buy and sell trades, only subject to the risk of intermediate price changes. This changes substantially if market prices systematically rise respectively fall for purchases respectively sales of other market participants, as acknowledged in the voluminous literature on price impact (e.g., \cite{bertsimas.lo.98,almgren.chriss.01,obizhaeva.wang.13}). These effects can stem, e.g., from adverse selection, as informed traders prey on the liquidity providers 
\cite{glosten.milgrom.85}, or from large incoming orders that eat into the order book \cite{obizhaeva.wang.13}. Our model can be extended to account for price impact of incoming orders 
equal to a fraction $\kappa\in[0,1)$ of the half-spread.\footnote{If the price impact is almost the half-spread, this leads to a model similar to the one of Madhavan et al.\ \cite{madhavan.al.97}.} 
This extension is still tractable; indeed, the above formula \eqref{eq:policy} for the leading-order optimal position limits generalizes to
\begin{equation}\label{23.4.2014.1}
\overline{\beta}_t = \frac{2\varepsilon_t ((1-\frac{\kappa}{2})\alpha^{(2)}_t-\frac{\kappa}{2}\alpha^{(1)}_t)}{\mathrm{ARA}(x_0) \sigma^2_t}, \quad \underline{\beta}_t = -\frac{2\varepsilon_t ((1-\frac{\kappa}{2})\alpha^{(1)}_t-\frac{\kappa}{2}\alpha^{(2)}_t)}{\mathrm{ARA}(x_0) \sigma^2_t}.
\end{equation}
For a symmetric order flow ($\alpha^{(1)}_t=\alpha^{(2)}_t=\alpha_t$), these formulas reduce to
\begin{equation*}
\overline{\beta}_t = \frac{2\varepsilon_t (1-\kappa)\alpha_t}{\mathrm{ARA}(x_0) \sigma^2_t}, \quad \underline{\beta}_t = -\frac{2\varepsilon_t (1-\kappa)\alpha_t}{\mathrm{ARA}(x_0) \sigma^2_t}.
\end{equation*}
That is, liquidity provision is simply reduced by a factor of $1-\kappa$ in this case. Here, the intuition is that, for $\kappa \approx 1$, price impact almost neutralizes the proportional transaction cost $\varepsilon_t$ the liquidity provider earns per trade. Hence, market making becomes unprofitable in this case as dwindling earnings are outweighed by inventory risk.

Even with price impact, the leading-order optimal certainty equivalent is still given by \eqref{eq:welfare}, if one replaces the trading boundaries accordingly. 
Hence, liquidity providers' profits shrink as they reduce their position limits due to price impact.
In view of the still optimistic assumptions, the optimal strategy associated with (\ref{23.4.2014.1}) can at least serve as an upper bound (respectively a lower bound 
if it takes a negative value) for a strategy to follow in practise. It can be computed explicitly and is easy to interpret. 

The remainder of the article is organized as follows. Our model is introduced in Section 2. Subsequently, the main results of the paper are presented in Section 3, and 
proved in Section 4. Finally, Section 5 extends the model to allow for price impact of incoming orders.

\section{Model}

\subsection{Limit Order Market}

We consider a financial market with one safe asset, normalized to one, and one risky asset, which can be traded either by \emph{market orders} or by \emph{limit orders}. 
Market orders are executed immediately, but purchases at time~$t$ are settled at a higher ask price~$(1+\varepsilon_t)S_t$, whereas sales only earn a lower bid 
price~$(1-\varepsilon_t)S_t$.\footnote{That is, $\varepsilon_t$ is the halfwidth of the relative bid-ask spread.} In contrast, limit orders can be put into the order book with an 
arbitrary exercise price, but are only executed once a matching order of another market participant arrives. Handling the complexity of limit orders with arbitrary exercise prices 
is a daunting task. To obtain a tractable model, we therefore follow \cite{kuehn.stroh.10} in assuming that limit buy or sell orders can only be placed at the 
current best bid or ask price, respectively. This can be justified as follows for \emph{small} investors, whose orders do not move market prices, and
for continuous best bid and ask prices. In this case, placing (and constantly 
updating) limit buy orders at a ``marginally'' higher price than the current best-bid price~$(1-\varepsilon_t)S_t$ guarantees execution as soon as the next market sell order of another 
trader arrives. For the sake of tractability, we abstract from the presence of a finite tick size.
Consequently, limit buy orders with a higher exercise price are executed at the same time but at a higher cost, 
whereas, by continuity, exercise prices below the current best bid are only executed later. This argument implicitly assumes that the incoming orders of other market participants are liquidity-driven 
and small, so that they do not move market prices (we show how to relax this assumption in 
Section~\ref{sec:impact}). Moreover, the investor under consideration is even smaller, in that her orders also don't influence market prices and are executed immediately against 
any incoming order of another market participant.\footnote{Partial execution is studied by Guilbaud and Pham \cite{guilbaud.pham.13b}.}  These assumptions greatly reduce the complexity 
of the problem. Yet, the model still retains the key tradeoff between making profits by providing liquidity, and the inventory risk caused by the positions built up along the way.\\
 
Let us now formalize trading in this limit order market. All stochastic quantities are defined on a filtered probability space $(\Omega,\scr{F},(\scr{F}_t)_{t\in[0,T]},P)$ 
satisfying the usual conditions. Strategies are described by quadruples $\mathfrak{S}=(M_t^B,M_t^S,L_t^B,L_t^S)_{t \in [0,T]}$ of predictable processes. Here, the nondecreasing 
processes $M^B_t$ and $M^S_t$ represent the investor's cumulated market buy and  sell orders until time $t$, respectively. 
$M_t^B$ and $M_t^S$  possess left and right limits, but may have double jumps. 

For a c\`adl\`ag process~$Y_t$ and a process~$M_t$ of finite variation, the integral of $Y_t$ with respect to $M_t$ is defined as 

\beam\label{20.1.2015.2}
\int_0^t (Y_{s-},Y_s)\,dM_s := \int_0^t Y_{s-}\,d M^r_s + \sum_{0\le s<t} Y_s (M_{s+}-M_s),
\eeam
where the integrator $M^r_t:=M_t-\sum_{0\le s<t}(M_{s+}-M_s)$ is c\`adl\`ag, i.e., the first term on the right-hand side of (\ref{20.1.2015.2}) is just a standard Lebesgue-Stieltjes
integral, see also \cite[Equation (2.2)]{kuehn.stroh.10}. For continuous integrands~$Y_t$, 
we set 
\beam\label{23.1.2015.1}
\int_0^t Y_s\,dM_s := \int_0^t (Y_{s-},Y_s)\,dM_s.
\eeam
%

$L^B_t$ (respectively $L^S_t$) specifies the size of the limit buy order 
with limit price $(1-\eps_t)S_t$ (respectively the limit sell order with limit price $(1+\eps_t)S_t$) in the book at time $t$, i.e., the amount that is bought or sold if an exogenous market 
sell or buy order arrives at time $t$.\footnote{The assumption that $L_t^B$ and $L_t^S$ can be arbitrary predictable processes is justified because the submission and deletion 
of limit orders is typically free.} Fix an initial position of $x_0$ units in the safe and $x=0$ units in the risky asset. The number of risky assets, denoted by $\varphi_t$, changes 
by market orders, and when limit buy or sell orders are executed at the jump times of some counting processes $N^{(1)}_t$ or $N^{(2)}_t$, respectively. At the jump times of $N_t^{(1)}$, the sell 
order of 
another market participant arrives so that the risky position of the liquidity provider is increased according to the number of corresponding limit orders in the book, and analogously 
for incoming buy orders at the jump times of $N_t^{(2)}$. Market orders are executed at the less favorable side of the bid-ask spread, whereas limit orders 
are matched against other traders' orders at the more favorable side. This leads to the following definition.
\begin{Definition}
A pair $(\varphi_t^0,\varphi_t)_{t \in [0,T]}$ specifying the number of monetary units and risky assets that the investor holds is called a \emph{self-financing portfolio process}
iff it can be written as 
\begin{equation}\label{eq:sf1}
\varphi_t=M^B_t-M^S_t+\int_0^{t-}L^B_s\,dN^{(1)}_s-\int_0^{t-}L^S_s\,dN^{(2)}_s
\end{equation}
and
\begin{align}
\varphi^0_t=x_0 &-\int_0^t (1+\eps_s)S_s\,dM^B_s+\int_0^t (1-\eps_s)S_s\,dM^S_s\label{eq:sf2}\\
&-\int_0^{t-}L^B_s(1-\eps_s)S_s\,dN^{(1)}_s+\int_0^{t-}L^S_s(1+\eps_s)S_s\,dN^{(2)}_s\nonumber
\end{align}
for some strategy~$\mathfrak{S}=(M_t^B,M_t^S,L_t^B,L_t^S)_{t \in [0,T]}$, where the integrals with respect to $M^B_t$ and $M^S_t$ are defined in (\ref{23.1.2015.1}). 
\end{Definition}

\begin{Bemerkung}
In reality, liquidity providers do not always have priority over other traders, so that their limit orders are not executed against all matching market orders. 
In our model, one can account for this by choosing smaller arrival rates for $N^{(1)}_t$ and $N^{(2)}_t$. To wit, these processes may just count the part of the market orders 
that actually trigger an execution.

Partial executions, from which we abstract, can to some extent be taken into account by reducing the arrival rates. Indeed, in the limiting regime of frequently 
arriving market orders, partial execution has a similar effect as a full execution that takes place only with some probability.
\end{Bemerkung}

Let now specify the primitives of our model. We work in a general It\^o process setting; in particular, no Markovian structure is required. The mid price follows 
$$
dS_t = S_t\sigma_t\,dW_t,\quad S_0>0, 
$$
for a Brownian motion $W_t$ and a volatility process $\sigma_t$. Assuming the mid-price of the risky asset to be a martingale allows to disentangle the effects of liquidity provision from pure investment due to trends in the risky asset; on a technical level, it is also needed to obtain both long and short positions even in the limit for small spreads. This assumption is reasonable since ``market making is typically not directional, in the sense that it does not profit from security prices going up or down'' \cite{guilbaud.pham.13}.  Moreover, as in the optimal execution literature (e.g., \cite{bertsimas.lo.98,almgren.chriss.01, obizhaeva.wang.13}), it is also justified by the time scales under consideration: we are \emph{not} dealing with long-term investment here, but much rather focusing on high-frequency liquidity provision strategies which are typically liquidated and evaluated at the end of a trading day \cite{menkveld.12}. Models for high-frequency strategies designed to profit from the predictability of short-term drifts are studied in \cite{cartea.al.11,guilbaud.pham.13b}.

 The arrival times of sell and buy orders by other market participants are modeled by counting processes $N_t^{(1)}$ and $N_t^{(2)}$ with absolutely continuous jump intensities 
 $\alpha^{(1)}_t$ and $\alpha^{(2)}_t$, respectively;\footnote{That is, $\alpha^i_t$ are predictable processes and $N^i_t - \int_0^t \alpha^i_s\,ds$ are local martingales for $i=1,2$.} 
 we assume that $N^{(1)}_t$ and $N^{(2)}_t$ a.s.\ never jump at the same time.
 In contrast to most of the previous literature, we do not restrict ourselves to Poisson processes with independent and identically distributed inter-arrival times. Instead, we allow for general arrival rates, thereby recapturing uncertainty about future levels and also empirical observations such as the U-shaped distribution of order flow over the trading day.
 
 We are interested in limiting results for a small relative half-spread $\varepsilon_t$. Therefore, we parametrize it as 
 $$\varepsilon_t=\varepsilon \mathcal{E}_t,$$
  for a small parameter $\varepsilon$ and an It\^o process $\mathcal{E}_t$. Unlike for models with proportional transaction costs (e.g., \cite{whalley.wilmott.97,janecek.shreve.04}), where it is natural to assume that all other model parameters remain constant as the spread tends to zero, the width of the spread is inextricably linked to the arrival rates of exogenous market orders here. Indeed, market orders naturally occur more frequently for more liquid markets with smaller spreads. Hence, we rescale the arrival rates  accordingly:
\beao
\alpha^{(1)}_t = \lambda^{(1)}_t \eps^{-\vt},\quad \alpha^{(2)}_t = \lambda^{(2)}_t \eps^{-\vt},\quad\mbox{for some $\vartheta\in(0,1)$.}
\eeao 

Here, $\vartheta>0$ ensures that the arrival rate of exogenous market orders explodes
as the bid-ask spread vanishes for $\varepsilon \to 0$. Nevertheless, the risk that limit orders are not executed
fast enough is a crucial factor for the solution in the limiting regime.
$\vartheta<1$ is assumed to ensure that the profits from liquidity
provision vanish as $\varepsilon \to 0$.
Higher arrival rates necessitate extensions of the model such as a price impact of incoming orders; see Section~\ref{sec:impact} for more details. 
In our optimal policy and the corresponding utility, the exponent $\vartheta$ only appears in the rates of the asymptotic expansions; the leading-order terms are fully determined 
by the arrival rates $\alpha^{(1)}_t,\alpha^{(2)}_t$.

The processes $\lambda^{(1)}_t$, $\lambda^{(2)}_t$, $\sigma_t$, and $\mathcal{E}_t$ satisfy the following technical assumptions:

\begin{Annahme}\label{22.5.2013.1}
$\lambda^{(1)}_t,\lambda^{(2)}_t,\sigma^2_t$, and $\mathcal{E}_t$ are positive continuous processes that are bounded and bounded away from zero. 
Furthermore, $\mathcal{E}_t$ is a semimartingale. Its predictable finite variation part and the quadratic variation process of its local martingale part are absolutely continuous 
with a bounded rate.
\end{Annahme}

Note that we allow for any stochastic dependence of the processes $\lambda^i_t$ and $\mathcal{E}_t$. In the market microstructure literature (e.g., \cite{cvitanic.kirilenko.2010}), plausible distributions of trading times as functions of the current bid-ask prices are derived.

\subsection{Preferences}

The investor's preferences are described by a general von Neumann-Morgenstern utility function $U: \mathbb{R} \to \mathbb{R}$ satisfying the following mild regularity conditions:

\begin{Annahme}\label{20.8.2013.2}
\begin{enumerate}
\item $U$ is strictly concave, strictly increasing, and twice continuously differentiable. 
\item The corresponding absolute risk aversion is bounded and bounded away from zero:
\begin{equation}\label{eq:ARA}
c_1 < \mathrm{ARA}(x):=-\frac{U''(x)}{U'(x)} < c_2, \quad \forall x\in\bbr,
\end{equation}
for some constants $c_1,c_2>0$.
\end{enumerate}
\end{Annahme}

\begin{Bemerkung}
Since $U'(x)=U'(0)\exp(\int_0^x U''(y)/U'(y)\,dy)$, Condition (\ref{eq:ARA}) implies that
\begin{equation}\label{eq:marginal}
U'(x), |U''(x)| \leq C \exp(-c_2 x),\quad \forall x\le 0\quad\mbox{and}\quad U'(x), |U''(x)| \leq C \exp(-c_1 x),\quad \forall x>0,
\end{equation}
for some constant $C>0$.
\end{Bemerkung}

The arch-example satisfying these assumptions is of course the exponential utility $U(x)=-\exp(-c x)$ with constant absolute risk aversion $c>0$. 
Analogues of our results can also be obtained for utilities defined on the positive half line, such as power utilities with constant \emph{relative} risk aversion. 
Here, we focus on utilities whose absolute risk aversion is uniformly bounded, because these naturally lead to bounded monetary investments in the risky asset, 
in line with the ``risk budgets'' often allocated in practice:

\begin{Definition}\label{21.5.2013.1}
A family of self-financing portfolio processes~$(\varphi^{0,\eps},\varphi^\eps)_{\eps\in(0,1)}$ in the limit order market is called \emph{admissible} if the monetary 
position~$\varphi^\eps S$ held in the risky asset is uniformly bounded.
\end{Definition}

This notion of admissibility is not restrictive. Indeed, it turns out that the optimal positions held in the risky asset 
even converge to zero uniformly as $\eps\to 0$ (cf.\ Theorem \ref{main_theorem}).

\section{Main Results}\label{sec:main}

The main results of the present study are a trading policy that is optimal at the leading order $\varepsilon^{2(1-\vartheta)}$ for 
small relative half-spreads $\varepsilon_t=\varepsilon \mathcal{E}_t$, and an explicit formula for the utility that can be obtained by applying it. To this end, define the monetary trading boundaries 
$$\overline{\beta}_t= \frac{2\varepsilon_t \alpha^{(2)}_t}{\mathrm{ARA}(x_0)\sigma^2_t}, \quad \underline{\beta}_t= -\frac{2\varepsilon_t \alpha^{(1)}_t }{\mathrm{ARA}(x_0)\sigma^2_t},$$
and consider the strategy that keeps the risky position $\beta^\varepsilon_t=\varphi_t S_t$ in the interval $[\underline{\beta}_t,\overline{\beta}_t]$ by means of market orders, 
while constantly updating the corresponding limit orders so as to trade to $\underline{\beta}_t$ respectively $\overline{\beta}_t$ whenever the buy respectively sell order of another market 
participant allows to sell or buy at favorable prices, respectively. Formally, this means that the process $(\beta^\varepsilon_t)_{t \in [0,T]}$ is defined as the 
unique solution to the Skorokhod stochastic differential equation
\beam\label{label1}
d\beta^\eps_{t+} = \beta^\eps_t\sigma_t dW_t + (\ov{\beta}_t-\beta^\eps_t)dN^{(1)}_t + (\un{\beta}-\beta^\eps_t) dN^{(2)}_t + d\Psi_t,\quad \beta^\eps_0=0,
\eeam
where $\Psi_t$ is the minimal finite variation process that keeps the solution in $[\underline{\beta}_t,\overline{\beta}_t]$.\footnote{That is, $\un{\beta}_t \le \beta_t^\eps \le
\ov{\beta}_t$ and $\Psi_t$ is a continuous process of finite variation such that
$\int_0^t 1_{\{\beta_s^\eps = \un{\beta}_s\}}d\Psi_s$ is nondecreasing, $\int_0^t 1_{\{\beta_s^\eps =  \ov{\beta}_s\}}d\Psi_s=0$ is nonincreasing, and
$\int_0^t 1_{\{\un{\beta}_s < \beta_s^\eps <  \ov{\beta}_s\}}d\Psi_s$ vanishes. 
Existence and uniqueness of the solution is guaranteed by Theorem~3.3 in 
S\l omi\'nski and Wojciechowski~\cite{slominski.wojciechowski.2013}, applied to the evolution of (\ref{label1}) between the jump times, without the integrals with respect to $N^{(1)}_t$ and
$N^{(2)}_t$. For the special reflected SDE we consider here, 
the solution is constructed explicitly in \eqref{18.5.2013.1}.} 
This corresponds to the strategy 
\beam\label{20.8.2013.1}
M^B_t := \int_0^t \frac1{S_s}d\Psi_s^+,\quad M^S_t := \int_0^t \frac1{S_s}d\Psi_s^-,\quad L^B_t:=\frac{\ov{\beta}_t-\beta_t^\eps}{S_t},\quad L_t^S:=\frac{\beta_t^\eps-\un{\beta}_t}{S_t}.
\eeam 
The family of associated portfolio processes given by
\beao
d\varphi^\eps_{t+} & = & \frac1{S_t} d\Psi^+_t -  \frac1{S_t} d\Psi^-_t + \frac{\ov{\beta}_t-\beta^\eps_t}{S_t}dN^{(1)}_t + \frac{\un{\beta}_t-\beta^\eps_t}{S_t}dN^{(2)}_t,\quad \varphi^\eps_0=0,\\ 
d\varphi^{0,\eps}_{t+} & = & -(1+\eps_t) d\Psi^+_t + (1-\eps_t) d\Psi^-_t + (1-\eps_t)(\beta^\eps_t - \ov{\beta}_t)dN^{(1)}_t + (1+\eps_t)(\beta^\eps_t - \un{\beta}_t) dN^{(2)}_t,\quad 
\eeao
$\varphi^{0,\eps}_0=x_0$, is admissible with \emph{liquidation wealth processes} $\wh{X}^\eps_t :=\varphi^{0,\eps}_t 
+ \varphi^\eps_t 1_{\{\varphi^\eps_t\ge 0\}} (1-\eps_t)S_t + \varphi^\eps_t 1_{\{\varphi^\eps_t< 0\}} (1+\eps_t)S_t$. The following is the main result of the present paper:

\begin{Satz}\label{main_theorem}
Suppose Assumptions \ref{22.5.2013.1} and \ref{20.8.2013.2} hold. Then, the above policy is optimal at the leading order $\varepsilon^{2(1-\vt)}$, in that:
$$
E[U(\wh{X}^\eps_T)] = U\left(x_0 + \frac{\eps^{2(1-\vt)}}{ARA(x_0)}E\left[\int_0^T\left(\frac{2\mathcal{E}_t^2 (\lambda^{(2)}_t)^2}{\sigma^2_t} 1_{A^{(1)}_t}
+ \frac{2\mathcal{E}_t^2 (\lambda^{(1)}_t)^2}{\sigma^2_t} 1_{A^{(2)}_t}\right)\,dt\right]\right) 
+ o( \eps^{2(1-\vt)}),\quad 
$$
$\eps\to 0$, for the corresponding liquidation wealth processes $(\wh{X}^\eps)_{\eps\in(0,1)}$, whereas
$$
E[U(X^\eps_T)] \le U\left(x_0 + \frac{\eps^{2(1-\vt)}}{ARA(x_0)}E\left[\int_0^T\left(\frac{2\mathcal{E}_t^2 (\lambda^{(2)}_t)^2}{\sigma^2_t} 1_{A^{(1)}_t}
+ \frac{2\mathcal{E}_t^2 (\lambda^{(1)}_t)^2}{\sigma^2_t} 1_{A^{(2)}_t}\right)\,dt\right]\right) 
+ o( \eps^{2(1-\vt)}),\quad 
$$
$\eps\to 0$, for any competing family $(X^\eps)_{\eps\in(0,1)}$ of admissible liquidation wealth processes. 

Here, $\omega\in A^{(1)}_t$ respectively $\omega\in A^{(2)}_t$ means that the investor's last trade before time $t$ was a purchase or sale, respectively, i.e.\ $A^i_t=\{\omega\ |\ (\omega,t) \in A^i\}$, $i=1,2$ for the predictable sets 
\begin{equation}\label{14.5.2013.1}
\begin{split}
A^{(1)} & =  \left\{(\omega,t)\ |\ \sup\{s\in (0,t)\ |\ \Delta N^{(1)}_s>0\} > \sup\{s\in (0,t)\ |\ \Delta N^{(2)}_s>0\}\right\},\\
A^{(2)} & =  \left\{(\omega,t)\ |\ \sup\{s\in (0,t)\ |\ \Delta N^{(2)}_s>0\} \ge \sup\{s\in (0,t)\ |\ \Delta N^{(1)}_s>0\}\right\}.
\end{split}
\end{equation}
(By convention, before the first jump of $(N^{(1)}, N^{(2)})$ all time points belong to $A^{(2)}$).
\end{Satz}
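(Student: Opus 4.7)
The plan is to expand $U$ to second order around $x_0$ and reduce both assertions to the asymptotics of the mean and mean-square of the wealth increment $\Delta^\varepsilon := \wh X^\varepsilon_T - x_0$ (respectively $X^\varepsilon_T - x_0$). Admissibility (Definition~\ref{21.5.2013.1}) ensures $|\beta^\varepsilon_s| := |\varphi^\varepsilon_s S_s| \le C$ uniformly in $\varepsilon,s$, and the exponential bounds \eqref{eq:marginal} coming from Assumption~\ref{20.8.2013.2} force the Taylor remainder $R^\varepsilon$ in $U(x_0 + \Delta^\varepsilon) = U(x_0) + U'(x_0)\Delta^\varepsilon + \tfrac12 U''(x_0)(\Delta^\varepsilon)^2 + R^\varepsilon$ to satisfy $E[|R^\varepsilon|] = o(\varepsilon^{2(1-\vt)})$. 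The theorem thereby reduces to computing the certainty-equivalent proxy $E[\Delta^\varepsilon] - \tfrac{\mathrm{ARA}(x_0)}{2}E[(\Delta^\varepsilon)^2]$ for the candidate, and bounding it above for every admissible strategy.

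\textbf{Verification.} Integration by parts applied to \eqref{eq:sf1}--\eqref{eq:sf2} and \eqref{label1} decomposes the candidate's liquidation wealth into the mid-price martingale $\int_0^T\!\beta^\varepsilon_s\sigma_s\,dW_s$, the two spread-earning jump integrals $\int_0^T\!\eps_s(\ov\beta_s-\beta^\varepsilon_s)dN^{(1)}_s$ and $\int_0^T\!\eps_s(\beta^\varepsilon_s-\un\beta_s)dN^{(2)}_s$, the reflection cost $-\int_0^T\!\eps_s\,d|\Psi|_s$, and a terminal haircut $-\eps_T|\varphi^\varepsilon_T|S_T$. All martingales vanish in expectation; the haircut is $O(\eps^{2-\vt}) = o(\eps^{2(1-\vt)})$; and $\int\eps\,d|\Psi|$ is negligible because jumps at intensity $\eps^{-\vt}$ reset $\beta^\varepsilon$ into the interior long before its Brownian excursion $\sigma_s\ov\beta_s\sqrt{\eps^\vt} = O(\eps^{1-\vt/2})$ can reach a boundary at distance $\ov\beta_s - \un\beta_s \sim \eps^{1-\vt}$. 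The same comparison yields the key asymptotic concentration $\beta^\varepsilon_s = \ov\beta_s + o(\ov\beta_s - \un\beta_s)$ on $A^{(1)}_s$, and symmetrically on $A^{(2)}_s$. Taking the compensator of the jump integrals and applying Ito's isometry on the Brownian part (noting the compensated-jump variance is $O(\eps^{4-3\vt})$, hence higher order) then gives
\[
E[\Delta^\varepsilon] = E\!\left[\int_0^T \eps_s(\ov\beta_s - \un\beta_s)\bigl(\alpha^{(2)}_s 1_{A^{(1)}_s} + \alpha^{(1)}_s 1_{A^{(2)}_s}\bigr)ds\right] + o(\eps^{2(1-\vt)})
\]
and $E[(\Delta^\varepsilon)^2] = E[\int_0^T(\ov\beta_s^2 1_{A^{(1)}_s} + \un\beta_s^2 1_{A^{(2)}_s})\sigma_s^2\,ds] + o(\eps^{2(1-\vt)})$. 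Substituting the explicit $\ov\beta, \un\beta$ of \eqref{eq:policy} and simplifying algebraically turns the certainty-equivalent proxy into the theorem's integrand.

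\textbf{Upper bound and main obstacle.} For every admissible strategy the analogous decomposition holds with the investor's $L^B_sS_s, L^S_sS_s$ in place of the candidate's gaps and an additional non-positive market-order cost $-\int \eps_s S_s(dM^B_s + dM^S_s)$; the variance identity $E[(\Delta^\varepsilon)^2] = E[\int(\beta^\varepsilon_s)^2\sigma_s^2\,ds] + o(\eps^{2(1-\vt)})$ persists by the same orders-of-magnitude analysis. Writing $\ov\beta^{\mathrm{ex}}_s := \beta^\varepsilon_{s-} + L^B_sS_s$ and $\un\beta^{\mathrm{ex}}_s := \beta^\varepsilon_{s-} - L^S_sS_s$ for the post-jump positions the investor's limit orders would engineer, the certainty-equivalent proxy is bounded above by the expectation of
\[
\eps_s\alpha^{(1)}_s(\ov\beta^{\mathrm{ex}}_s - \beta^\varepsilon_s) + \eps_s\alpha^{(2)}_s(\beta^\varepsilon_s - \un\beta^{\mathrm{ex}}_s) - \tfrac{\mathrm{ARA}(x_0)}{2}\sigma_s^2(\beta^\varepsilon_s)^2
\]
integrated over $[0,T]$. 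Pointwise maximization of this integrand is ill-posed because the coefficient of $\ov\beta^{\mathrm{ex}}$ is positive on $A^{(2)}$; the upper bound must therefore exploit the dynamical coupling that whichever $\ov\beta^{\mathrm{ex}}_s$ is chosen on $A^{(2)}_s$ becomes the in-force $\beta^\varepsilon$ on $A^{(1)}$ immediately after the next $N^{(1)}$-jump, and symmetrically for $\un\beta^{\mathrm{ex}}$ and $N^{(2)}$. This link ties the quadratic penalty $-\tfrac{\mathrm{ARA}}{2}(\beta^\varepsilon)^2\sigma^2 1_{A^{(1)}}$ to the earning $\eps\alpha^{(1)}\ov\beta^{\mathrm{ex}}1_{A^{(2)}}$, and the global first-order conditions reproduce \eqref{eq:policy} with maximum value equal to the theorem's integrand. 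Carrying this out rigorously is the main obstacle. The cleanest route is a verification argument: construct an $\scr{F}_t$-adapted process $R^\varepsilon_t$ that agrees with $U(X^\varepsilon_T)$ at $t=T$, equals the claimed leading-order value at $t=0$, and whose drift under every admissible dynamics is non-positive up to $o(\eps^{2(1-\vt)})$; optional sampling then yields the bound. Uniform control of the Taylor remainder across all admissible strategies, combining \eqref{eq:marginal} with the $L^\infty$-admissibility of $\beta^\varepsilon$, is the other delicate piece.
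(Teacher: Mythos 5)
Your reduction to the certainty-equivalent proxy and your evaluation of the candidate policy are broadly consistent with what a direct argument can deliver (one quibble: after substituting $\ov\beta,\un\beta$, your proxy integrand equals $2\eps_t^2\alpha^{(1)}_t\alpha^{(2)}_t/(\mathrm{ARA}(x_0)\sigma_t^2)$ pointwise, which agrees with the theorem's integrand only after taking expectations and compensating $d1_{A^{(1)}_t}$ --- a discrepancy of order $O(\eps^{2-\vt})=o(\eps^{2(1-\vt)})$, so harmless, but not the purely ``algebraic simplification'' you describe). The genuine gap is the upper bound, which you yourself flag as ``the main obstacle'': you observe that pointwise maximization of the earnings-minus-risk integrand is ill-posed because the coefficient of the post-jump target is positive on the ``wrong'' set, you correctly diagnose that one must exploit the dynamical coupling between the order placed now and the inventory held after the next execution, and you then propose to construct a supersolution $R^\eps_t$ with nonpositive drift --- but you do not construct it. Since optimality over \emph{all} admissible families is the substance of the theorem, the harder half is not proved.

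The paper's device for exactly this point is the shadow price $\wt S_t$ of \eqref{21.8.2013.1}: a frictionless price evolving inside the bid--ask spread that jumps to the bid whenever $N^{(1)}$ jumps and to the ask whenever $N^{(2)}$ jumps, staying at the bid on $A^{(1)}$ and at the ask on $A^{(2)}$ in between. Then (i) by \eqref{24.5.2013.1} every admissible liquidation wealth is dominated pathwise by the frictionless wealth $x_0+\int\psi^\eps_s\,d\wt S_s$, and (ii) the return $d\wt R_t$ carries drift $\approx +2\eps_t\alpha^{(2)}_t$ on $A^{(1)}$ and $\approx -2\eps_t\alpha^{(1)}_t$ on $A^{(2)}$ --- i.e., the expected profit of the \emph{next} execution is booked as a drift of the position held \emph{now} --- so the frictionless problem becomes a well-posed pointwise concave quadratic maximization in $\eta$ whose optimizer and optimal value are exactly \eqref{eq:policy} and the theorem's integrand. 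This is precisely the verification object your sketch calls for but does not supply. Two further ingredients you would also need to make the argument uniform: the dichotomy of Lemma \ref{27.3.2014.3}, which justifies the second-order expansion across all admissible competitors by discarding any strategy whose wealth does not stay near $x_0$ (such a strategy is worse than not trading at all); and Lemma \ref{lemma_boundary}, the rigorous form of your ``asymptotic concentration'' of $\beta^\eps$ at the boundaries, which is what shows the candidate actually attains the frictionless upper bound at leading order.
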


If the model parameters $\lambda^i_t,\sigma_t,\varepsilon_t$ are all constant, then the above formula reduces to
\beao
E[U(\wh{X}^\eps_T)]= U\left(x_0+\frac{2\lambda^{(1)} \lambda^{(2)}}{\mathrm{ARA}(x_0) \sigma^2}T\varepsilon^{2(1-\vartheta)}
\right)+o(\varepsilon^{2(1-\vartheta)}),\quad \eps\to 0.
\eeao

\section{Proofs}\label{sec:proofs}

This section contains the proof of our main result, Theorem \ref{main_theorem}. We proceed as follows: first, it is shown that as the 
relative half-spread $\varepsilon_t=\varepsilon \mathcal{E}_t$ tends to zero and jumps to the trading boundaries $\underline{\beta}_t,\overline{\beta}_t$ become more 
and more frequent for our policy $\beta_t^\varepsilon$, almost all time is eventually spent near $\underline{\beta}_t,\overline{\beta}_t$. Motivated by this result, 
we then construct a frictionless ``shadow market'', which is at least as favorable as the original limit order market, and for which the policy  
that oscillates between $\underline{\beta}_t$ and $\overline{\beta}_t$ is optimal at the leading order for small spreads. In a third step, we then show that the 
utility obtained from applying our original policy $\beta^\varepsilon_t$ matches the one for the approximate optimizer in the more favorable frictionless shadow market at 
the leading order for small spreads, so that our candidate $\beta^\varepsilon_t$ is indeed optimal at the leading order. 

\subsection{An Approximation Result}

As described above, we start by showing that our policy $\beta^\varepsilon_t$ spends almost all time near the boundaries $\underline{\beta}_t,\overline{\beta}_t$ as the relative half-spread $\varepsilon_t=\varepsilon \mathcal{E}_t$ collapses to zero and orders of other market participants become more and more frequent:

\begin{Lemma}\label{lemma_boundary}
On the stochastic interval $\zu \inf\{t>0\ |\ \Delta N^{(1)}_t>0\quad\mbox{or}\quad \Delta N^{(2)}_t>0\},T\zu$, the process
\beao
\left(\beta_t^\eps - \ov{\beta}_t 1_{A^{(1)}_t} - \un{\beta}_t 1_{A^{(2)}_t}\right)\eps^{\vt-1}
\eeao
converges to $0$ uniformly in probability for $\eps\to 0$.
\end{Lemma}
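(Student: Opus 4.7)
The plan is to analyse $\beta^\eps$ on the stochastic intervals between consecutive jumps of the pooled counting process $N := N^{(1)}+N^{(2)}$, starting from the first jump $\tau_1$. By the defining SDE \eqref{label1}, $\beta^\eps$ jumps to $\ov\beta$ at each $N^{(1)}$-jump (so $1_{A^{(1)}}$ becomes $1$) and to $\un\beta$ at each $N^{(2)}$-jump, hence the bracketed expression vanishes at every jump time of $N$ after $\tau_1$. It therefore suffices to prove that on a generic inter-arrival interval $[\tau_k,\tau_{k+1})$, $\beta^\eps$ does not stray from the corresponding boundary by more than $o(\eps^{1-\vt})$, \emph{uniformly} over all such intervals in $[0,T]$.

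The first step is to control the maximum inter-arrival gap. By Assumption \ref{22.5.2013.1}, the total intensity $\alpha^{(1)}+\alpha^{(2)}$ is at least $c\eps^{-\vt}$, so each inter-arrival time is conditionally dominated by $\mathrm{Exp}(c\eps^{-\vt})$. Since the number of jumps in $[0,T]$ is $O_P(\eps^{-\vt})$, a standard union bound yields $\max_k(\tau^\eps_{k+1}\wedge T - \tau^\eps_k)\le \Delta_\eps := \eps^{\vt}(\log\eps^{-1})^2$ with probability tending to $1$.

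Now fix an interval $[\tau_k,\tau_{k+1})$ following an $N^{(1)}$-jump (so $\beta^\eps_{\tau_k}=\ov\beta_{\tau_k}$ and we are on $A^{(1)}$). I would use a bootstrapping argument: assuming for now that $\beta^\eps>\un\beta$ throughout this interval, only reflection at the upper barrier $\ov\beta$ acts, and the one-sided Skorokhod representation gives
\[
0\le \ov\beta_t-\beta^\eps_t \;\le\; 2\sup_{s\in[\tau_k,t]}|\ov\beta_s-\ov\beta_{\tau_k}| \;+\; 2\sup_{s\in[\tau_k,t]}\Bigl|\int_{\tau_k}^s \beta^\eps_u\sigma_u\,dW_u\Bigr|.
\]
The first term equals $\eps^{1-\vt}$ times the oscillation, over a window of length at most $\Delta_\eps$, of the continuous process $f_t:=2\mathcal{E}_t\lambda^{(2)}_t/(\mathrm{ARA}(x_0)\sigma_t^2)$, and therefore vanishes as $\eps\to 0$ by uniform continuity of $f$ on $[0,T]$. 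For the second, the integrand is bounded a.s.\ by $C\eps^{1-\vt}$, so the Bernstein inequality for continuous martingales with a.s.-bounded quadratic variation yields a sub-Gaussian tail of the form $2\exp(-c\delta^2\eps^{-\vt}/(\log\eps^{-1})^2)$ per interval; a union bound over the $O_P(\eps^{-\vt})$ intervals in $[0,T]$ keeps this probability negligible. The resulting bound is $o_P(\eps^{1-\vt})$, which is much smaller than the spread $\ov\beta-\un\beta$ (of exact order $\eps^{1-\vt}$), so $\beta^\eps$ indeed never reaches $\un\beta$, closing the bootstrap. The symmetric argument on $N^{(2)}$-intervals completes the proof.

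The main obstacle is to combine the exponential martingale tail bound with a union bound over a number of intervals that itself grows like $\eps^{-\vt}$; the logarithmic cushion built into $\Delta_\eps$ is what allows Bernstein's inequality to beat the union-bound factor, and the bootstrap is needed because the one-sided Skorokhod representation requires the process to stay on one side of the reflection interval.
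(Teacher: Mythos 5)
Your proof is correct, and its skeleton matches the paper's: decompose $[0,T]$ into the inter-arrival intervals of the order flow, bound the number of jumps, bound the maximal gap, and control the excursion of the reflected diffusion away from the active boundary on each short window. The core estimate, however, is obtained by a genuinely different route. The paper writes the solution of the Skorokhod SDE explicitly in multiplicative form, which yields $\beta^\eps_t/\ov{\beta}_t=\exp(Y_t-\sup_{s\in[\tau^\eps_i,t]}Y_s)$ for a single continuous process $Y$ that does \emph{not} depend on $\eps$ (it absorbs both the driving Brownian integral and the moving boundary). All excursion events, over all $O(\eps^{-\vt})$ intervals, are then contained in the one event that the modulus of continuity of $Y$ over windows of length $\wt{h}$ is large, so no quantitative tail bound and no union bound over intervals is needed; moreover, positivity of the exponential makes it impossible for the process to reach the opposite barrier $\un{\beta}_t<0$, so the reflection is automatically one-sided. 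You instead use the additive one-sided Skorokhod inequality, pay for the $\eps^{-\vt}$-fold union bound with a Bernstein-type exponential martingale inequality (the quadratic variation over a window of length $\Delta_\eps=\eps^{\vt}(\log\eps^{-1})^2$ is $O(\eps^{2(1-\vt)}\Delta_\eps)$, so the tail $\exp(-c\delta^2\eps^{-\vt}(\log\eps^{-1})^{-2})$ indeed beats the union bound), and close a bootstrap to justify that only the upper barrier acts. Both arguments are sound; the paper's is softer (it needs only path continuity, which is where Assumption 2.3 enters), while yours is quantitative and would extend to settings where an explicit solution of the reflected SDE is unavailable. If you write it up, make two points explicit: the exponential inequality must be applied to the martingale restarted at the stopping time $\tau_k$ (optional stopping for the exponential supermartingale), and the union bound should be taken after first restricting to the event $\{N_T\le 2T\lambda^{(1)}_{\max}\eps^{-\vt}+2T\lambda^{(2)}_{\max}\eps^{-\vt}\}$, exactly as the paper does with $M_{1,\eps}$.
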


\begin{proof}
The solution of the Skorokhod SDE~\eqref{label1} can be constructed explicitly. Let $(\tau_i^\eps)_{i\in\bbn}$ be the jump times of $N^{(1)}_t$, i.e.\ the jumps 
of $\beta^\varepsilon_t$ to the upper boundary $\overline{\beta}_t$. (To ease notation we suitably extend the model beyond $T$.) 
From $\tau_i^\eps$ up to the next jump time of $(N_t^{(1)},N_t^{(2)})$, the solution
is then given by
\beam\label{18.5.2013.1}
\beta^\eps_t = \exp\left(\int_{\tau_i^\eps}^t \sigma_u\,dW_u - \frac12 \int_{\tau_i^\eps}^t \sigma^2_u\,du 
- \sup\left\{ \int_{\tau_i^\eps}^s \sigma_u\,dW_u - \frac12 \int_{\tau_i^\eps}^s \sigma^2_u\,du - \ln\left(\ov{\beta}_s\right)\ |\ 
s\in[\tau^\eps_i,t]\right\}\right)
\eeam
(analogously after jump times of $N_t^{(2)}$), and $\Psi_t = \Psi_{\tau^\eps_i} + \beta^\varepsilon_t - \ov{\beta}_{\tau^\eps_i} - \int_{\tau_i^\eps}^t \beta^\eps_u\sigma_u\,dW_u$.
Indeed, $\beta^\eps$ from (\ref{18.5.2013.1}) satisfies $d\beta^\eps_t = \beta^\eps_t \sigma_t\,dW_t  
- \beta^\eps_t\,d\left(\sup\left\{ \int_{\tau_i^\eps}^s \sigma_u\,dW_u - \frac12 \int_{\tau_i^\eps}^s \sigma^2_u\,du - \ln\left(\ov{\beta}_s\right)\ |\ 
s\in[\tau^\eps_i,t]\right\}\right)$, and the latter integrator is nondecreasing and on the set~$\{\beta^\eps<\ov{\beta}\}$ even constant
because the above supremum is not attained at $t$ if $\ln(\beta^\eps_t)<\ln(\ov{\beta}_t)$.\\

Define the process
\beao
Y_t:=\int_0^t \sigma_u\,dW_u - \frac12 \int_0^t \sigma^2_u\,du 
- \ln\left(\frac{2\mathcal{E}_t \lambda^{(2)}_t}{\mathrm{ARA}(x_0)\sigma^2_t}\right),\quad t\ge 0.
\eeao
$Y_t$ does not depend on the scaling parameter $\eps$, and, by Assumption~\ref{22.5.2013.1}, it possesses almost surely continuous paths, which implies that
\beam\label{1.9.2014.1}
\sup_{t_1,t_2\in[0,T],\ |t_2-t_1|\le h} |Y_{t_2}-Y_{t_1}|\to 0,\quad\mbox{a.s.},\quad h\to 0. 
\eeam
By (\ref{18.5.2013.1}), one has
\beam\label{2.9.2014.1}
\frac{\beta^\eps_t}{\ov{\beta}_t} =  \exp\left( Y_t  -  \sup_{s\in[\tau^\eps_i,t]} Y_s\right)
\eeam
for all $t$ between $\tau^\eps_i$ and the next jump time of $(N^{(1)}_t,N^{(2)}_t)$ and 
\beam\label{2.9.2014.1b}
\exp\left( Y_t  -  \sup_{s\in[\tau^\eps_i,t]} Y_s\right)
\ge \exp\left(-\sup_{t_1,t_2\in[0,T],\ |t_2-t_1|\le h} |Y_{t_2}-Y_{t_1}|\right)\mbox{for}\ h>0, t\in(\tau^\eps_i,(\tau^\eps_i + h)\wedge T].
\eeam

Now, fix any $\widetilde{\varepsilon}>0$. By (\ref{1.9.2014.1}), there exists $\wt{h}>0$ with
\beam\label{2.9.2014.2}
P\left(\exp\left(-\sup_{t_1,t_2\in[0,T],\ |t_2-t_1|\le \wt{h}} |Y_{t_2}-Y_{t_1}|\right)<1-\wt{\eps}\right)\le \frac{\wt{\eps}}3.
\eeam
Note that, after a limit {\em sell} order execution, $\beta_t^\eps$ jumps to $\un{\beta}_t<0$ and then cannot enter the region~$[0,(1-\wt{\eps})\ov{\beta}_t)$ before 
the next limit \emph{buy} order execution. As a result, we can use (\ref{2.9.2014.1}) and
estimate the excursions away from the upper trading boundary $\overline{\beta}_t$ as follows:
\begin{equation}\label{eq:prob}
P\left(\exists t\in(\tau^\eps_1,T]\mbox{\ s.t.\ }\frac{\beta^\eps_t}{\ov{\beta}_t}\in [0,1-\wt{\eps})\right)
 \le  P\left(M_{1,\eps}\cup\bigcup_{i=1}^{\lfloor 2 T\lambda^{(1)}_{\max}\eps^{-\vt}\rfloor} M_{2,\eps,i}\cup \bigcup_{i=1}^{\lfloor 2 T\lambda^{(1)}_{\max}\eps^{-\vt}\rfloor} M_{3,\eps,i}\right),
\end{equation}
where
\beao
& & M_{1,\eps}:=\{\omega\in\Omega\ |\ N^{(1)}_T(\omega)>\lfloor 2 T\lambda^{(1)}_{\max}\eps^{-\vt}\rfloor\},\\ 
& & M_{2,\eps, i}:=\{\omega\in\Omega\ |\ \tau^\eps_{i+1}(\omega) - \tau^\eps_i(\omega) > \wt{h}\},\\
& & M_{3,\eps, i}:=\left\{\omega\in\Omega\ |\ \exp( Y_t(\omega)  -  \sup_{s\in[\tau^\eps_i(\omega),t]} Y_s(\omega))<1-\wt{\eps}\ \mbox{for 
some\ }t\in(\tau^\eps_i(\omega),(\tau^\eps_i(\omega)+\wt{h})\wedge T]\right\},
\eeao
with $\lambda^{(1)}_{\max}\eps^{-\vt}$ being an upper bound for the jump intensity of the counting process~$N^{(1)}_t$. 
In plain English, there are either many jumps to the upper boundary $\overline{\beta}_t$, and/or there is a long-time excursion away from $\overline{\beta}_t$, and/or there is a 
short excursion that nevertheless takes the risky position $\beta^\varepsilon_t$ sufficiently far way from the boundary $\overline{\beta}_t$. In the sequel, we show that the probability 
for these events is smaller than $\widetilde{\varepsilon}$ for $\varepsilon$ sufficiently small. Observe that after the first jump of $(N^{(1)}_t,N^{(2)}_t)$ we have 
$0<\beta^\eps_t \le \ov{\beta}_t$ on $A^{(1)}_t$ and $\un{\beta}_t \le \beta^\eps_t<0$ on $A^{(2)}_t$. As 
$\ov{\beta}_t=\frac{2\mathcal{E}_t\lambda^{(2)}_t}{\sigma^2_t}\eps^{1-\vt}$ and the process $\frac{2\mathcal{E}_t\lambda_t^{(2)}}{\sigma_t^2}$ 
is bounded, the estimate for \eqref{eq:prob} in turn yields that $|\beta_t^\eps-\ov{\beta}_t|\eps^{\vt-1}1_{A_t^{(1)}}\to 0$ uniformly in probability. 
By applying the same arguments to $\beta_t^\eps$ on $A_t^{(2)}$ we obtain the assertion. 

Let us now derive the required estimates for \eqref{eq:prob}. First, recall that the time-changed process $u\mapsto N^{(1)}_{\Gamma_u}$ with 
$\Gamma_u:=\inf\{v\ge 0\ |\ \int_0^v \lambda^{(1)}_s\eps^{-\vt}\,ds= u\}$ is a standard Poisson process (cf.\ \cite[Theorem~16]{bremaud.81}) and $\Gamma_{\lambda^{(1)}_{\max}\eps^{-\vt}T}\ge T$. 
As a result:
\beam\label{label14}
P(M_{1,\eps}) \le P(N^{(1)}_{\Gamma_{\lambda^{(1)}_{\max}\eps^{-\vt}T}} >\lfloor 2 T\lambda^{(1)}_{\max}\eps^{-\vt}\rfloor)\le \frac{\wt{\eps}}3,\quad\mbox{for $\eps$ small enough}, 
\eeam
by the law of large numbers.

Next, since the jump intensity of $N^{(1)}_t$ is bounded from below by some $\lambda^{(1)}_{\min}\eps^{-\vt}>0$, we obtain
$$
P\left(\tau^\eps_{i+1}-\tau^\eps_i > \frac{x}{\lambda^{(1)}_{\min} \eps^{-\vt}}\right)\le \exp(-x),\quad x\in\bbr_+,\ i\in\bbn. 
$$
Choosing $x=\lambda^{(1)}_{\min}\eps^{-\vt}\wt{h}$, this estimate yields
$$
P\left( \tau^\eps_{i+1}-\tau^\eps_i > \wt{h}\quad \mbox{for some\ } i=1,\ldots,\lfloor 2T\lambda^{(1)}_{\max}\eps^{-\vt}\rfloor\right)
\le \lfloor 2 T\lambda^{(1)}_{\max} \eps^{-\vt}\rfloor \exp\left(-\lambda^{(1)}_{\min}\eps^{-\vt}\wt{h} \right).
$$
This in turn gives
\beam\label{label13}
P\left(\bigcup_{i=1}^{\lfloor 2 T\lambda^{(1)}_{\max}\eps^{-\vt}\rfloor} M_{2,\eps,i}\right) \le \frac{\wt{\eps}}3,\quad\mbox{for $\eps$ small enough.}
\eeam

Finally, by (\ref{2.9.2014.1b}) and (\ref{2.9.2014.2}), we have that
\beam\label{label12}
P\left(\bigcup_{i=1}^{\lfloor 2 T\lambda^{(1)}_{\max}\eps^{-\vt}\rfloor}M_{3,\eps, i}\right) 
\le P\left( \exp\left(-\sup_{t_1,t_2\in[0,T],\ |t_2-t_1|\le \wt{h}} |Y_{t_2}-Y_{t_1}|\right)<1-\wt{\eps}\right)
\le \frac{\wt{\eps}}3
\eeam
for $\eps$ small enough. Piecing together \eqref{label14}, \eqref{label13}, and \eqref{label12}, it follows that the probability in \eqref{eq:prob} is indeed bounded by 
$\widetilde{\varepsilon}$ 
for sufficiently small $\varepsilon$. This completes the proof.
\end{proof}

\subsection{An Auxiliary Frictionless Shadow Market}\label{sec:proof1}

Similarly as for markets with proportional transaction costs \cite{kallsen.muhlekarbe.10} and for limit order markets \cite{kuehn.stroh.10}, we reduce the original optimization problem 
to a frictionless version, by replacing the mid-price $S_t$ with a suitable ``shadow price'' $\wt{S}_t$. The latter is potentially more favorable for trading but nevertheless leads to 
an equivalent optimal strategy and utility. The key difference to \cite{kuehn.stroh.10} is that  we focus on asymptotic results for small spreads here. Hence, it suffices to determine 
``approximate'' shadow prices: these are at least as favorable, and there exist strategies that trade at the same prices 
in both markets for all spreads, but are only ``almost'' optimal in the frictionless market for 
small spreads. This simplifies the construction significantly, and thereby allows to treat the general framework considered here. 

Indeed, the approximation result established in Lemma \ref{lemma_boundary} suggests that it suffices to look for a frictionless shadow market where the optimal policy oscillates between the upper and lower boundaries $\underline{\beta}_t,\overline{\beta}_t$ at the jump times of the counting processes $N^{(1)}_t,N^{(2)}_t$. To this end, it turns out that one can simply let the shadow price jump to the bid respectively ask price whenever a limit buy respectively sell order is executed, and then let it evolve as the bid respectively ask price until the next jump time. To make this precise, let $\wt{S}_0=(1+\eps_0)S_0$ and define
\beam\label{21.8.2013.1}
\frac{d\wt{S}_t}{\wt{S}_{t-}} & =  &\sigma_t dW_t + 1_{A^{(1)}_t}\left(\left(\frac{1+\varepsilon_t}{1-\varepsilon_t}-1\right)dN^{(2)}_t - \frac1{1-\eps_t}d\eps_t 
- \frac{\sigma_t}{1-\eps_t}d\langle W,\eps\rangle_t\right)\nonumber\\
  & & +1_{A^{(2)}_t}\left(\left(\frac{1-\varepsilon_t}{1+\varepsilon_t}-1\right)dN^{(1)}_t + \frac1{1+\eps_t}d\eps_t 
+ \frac{\sigma_t}{1+\eps_t}d\langle W,\eps\rangle_t\right)\\
& =:&d\widetilde{R}_t,\nonumber
\eeam
with $A^{(1)}$ and $A^{(2)}$ from (\ref{14.5.2013.1}), where the terms $-\frac1{1-\eps_t}d\eps_t - \frac{\sigma_t}{1-\eps_t}d\langle W,\eps\rangle_t$  and
$\frac1{1+\eps_t}d\eps_t + \frac{\sigma_t}{1+\eps_t}d\langle W,\eps\rangle_t$ ensure that $\wt{S}_t=(1-\eps_t)S_t$ on 
$A^{(1)}_{t+}:=\limsup_{n\to\infty} A^{(1)}_{t+1/n}$ and $\wt{S}_t=(1+\eps_t)S_t$
on $A^{(2)}_{t+}:=\limsup_{n\to\infty} A^{(2)}_{t+1/n}$ even for time-varying $\eps_t$. (The correction terms can easily be derived by applying the integration by parts formula to the processes
$1\mp \eps_t$ and $S_t$.) Then:
\beam\label{24.5.2013.1}
(1-\eps_t)S_t \le \wt{S}_t \le (1+\eps_t)S_t,\ \wt{S}_t=(1-\eps_t)S_t\ \mbox{on}\ \Delta N_t^{(1)}>0,\ \wt{S}_t=(1+\eps_t)S_t\ \mbox{on}\ \Delta N_t^{(2)}>0.
\eeam
That is, the frictionless price process $\wt{S}_t$ evolves in the bid-ask spread, and therefore always leads to at least as favorable trading prices for market orders. 
When more favorable trading prices are available due to the execution of limit orders, $\wt{S}_t$ jumps to match these. Hence, trading $\wt{S}_t$ is at least as profitable as 
the original limit order market. 

The key step now is to determine the optimal policy for $\wt{S}_t$. In the corresponding frictionless market, portfolios can be equivalently parametrized directly in terms of monetary positions 
$\widetilde{\eta}_t=\varphi_t\wt{S}_{t-}$ held in the risky asset, with associated wealth process
$$\widetilde{X}^{\widetilde{\eta}}_t = x_0 + \int_0^t \widetilde{\eta}_s d\widetilde{R}_s.$$

We have the following dichotomy:

\begin{Lemma}\label{27.3.2014.3}
Let $(\widetilde{\eta}_t^{\varepsilon})_{\varepsilon\in (0,1)}$ be a uniformly bounded family of policies with associated wealth processes~$\widetilde{X}^{\widetilde{\eta}^\eps}$. 
Then, for every $\delta>0$ there exists an $\varepsilon_\delta>0$ such that:
\beam\label{28.3.2014.1}
P\left(\widetilde{X}^{\widetilde{\eta}^\eps}_t \in (x_0-\delta,x_0+\delta),\ \forall t\in[0,T]\right)\ge 1-\delta\quad \mbox{or}\quad 
E\left[U\left(\widetilde{X}^{\widetilde{\eta}^\eps}_T\right)\right]< U(x_0),\quad\forall \eps\le\eps_\delta. 
\eeam
\end{Lemma}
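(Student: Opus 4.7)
The plan is to perform a Doob--Meyer decomposition of the wealth process, show that its drift piece is uniformly of order $\eps^{1-\vt}$, and then use local strong concavity of $U$ to argue that any non-trivial martingale fluctuation of the wealth costs more expected utility than the vanishing drift can possibly compensate.

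First, write $\widetilde{X}^{\widetilde{\eta}^\eps}_t=x_0+M^\eps_t+D^\eps_t$ via the Doob--Meyer decomposition, with $M^\eps$ a local martingale vanishing at $0$ and $D^\eps$ the predictable finite variation part. Computing the compensator of $d\widetilde R$ from~\eqref{21.8.2013.1}, the jump pieces contribute an instantaneous drift $\frac{2\eps_t}{1-\eps_t}\alpha^{(2)}_t 1_{A^{(1)}_t}-\frac{2\eps_t}{1+\eps_t}\alpha^{(1)}_t 1_{A^{(2)}_t}$ of order $\eps^{1-\vt}$, while the $d\eps_t$-correction terms contribute only $O(\eps)$; since $\vt>0$ the jump contribution dominates. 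Together with the uniform bound $|\widetilde{\eta}^\eps|\le K$ and the bounds from Assumption~\ref{22.5.2013.1}, this yields a pathwise estimate $\|D^\eps\|_\infty\le C\eps^{1-\vt}$. The predictable quadratic variation $\langle M^\eps\rangle_T$ is uniformly bounded (continuous part dominated by $K^2\|\sigma\|_\infty^2 T$, jump part $O(\eps^{2-\vt})$) and $M^\eps$ has jumps of size $O(\eps)$, so Burkholder--Davis--Gundy provides $\sup_{\eps\in(0,1)} E[(M^\eps_T)^p]<\infty$ for every $p\ge 1$; in particular $M^\eps$ is a true martingale with $E[M^\eps_T]=0$.

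Assume now that the first alternative in~\eqref{28.3.2014.1} fails, i.e., $P(\sup_t|\widetilde{X}^{\widetilde{\eta}^\eps}_t-x_0|\ge\delta)>\delta$. For $\eps$ small enough that $\|D^\eps\|_\infty\le\delta/2$, this forces $\sup_t|M^\eps_t|\ge\delta/2$ on the same event, so Doob's $L^2$-inequality delivers $E[(M^\eps_T)^2]\ge\delta^3/16=:v_0$. Fix a truncation radius $K_0$ and set $c_{K_0}:=\inf_{|y|\le K_0}(-U''(x_0+y))>0$ (positive by Assumption~\ref{20.8.2013.2}). A second-order Taylor expansion on $B:=\{|M^\eps_T|\le K_0\}$, combined with the plain tangent bound on $B^c$, gives, after taking expectations and using $E[M^\eps_T]=0$, the estimate $E[U(x_0+M^\eps_T)]\le U(x_0)-\tfrac{c_{K_0}}{2}E[(M^\eps_T)^2 1_B]$. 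Cauchy--Schwarz with the $L^4$-bound above shows $E[(M^\eps_T)^2 1_{B^c}]=O(1/K_0)$, so choosing $K_0=K_0(\delta)$ large enough yields $E[(M^\eps_T)^2 1_B]\ge v_0/2$, and hence $E[U(x_0+M^\eps_T)]\le U(x_0)-c_{K_0}v_0/4$.

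It remains to transfer this bound to $\widetilde{X}^{\widetilde{\eta}^\eps}_T=x_0+M^\eps_T+D^\eps_T$. By the mean value theorem and the monotonicity of $U'$, $|U(\widetilde{X}^{\widetilde{\eta}^\eps}_T)-U(x_0+M^\eps_T)|\le\|D^\eps\|_\infty\,U'(x_0+M^\eps_T-\|D^\eps\|_\infty)$, and the exponential growth bound~\eqref{eq:marginal} on $U'$ combined with the exponential moment bounds on $M^\eps_T$ (which follow from the uniform boundedness of $\langle M^\eps\rangle_T$) shows that the expectation of the right-hand side is bounded uniformly in $\eps$. Hence $E[U(\widetilde{X}^{\widetilde{\eta}^\eps}_T)]\le U(x_0)-c_{K_0}v_0/4+C'\eps^{1-\vt}$, which is strictly less than $U(x_0)$ for all $\eps$ small enough, establishing the second alternative. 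The main obstacle is that $U$ is not globally strongly concave, so $c_{K_0}$ may decay exponentially in $K_0$; the choice of $K_0$ must therefore balance capturing most of the $L^2$-mass of $M^\eps_T$ against this decay. It is precisely for this reason that the uniform boundedness of $\widetilde{\eta}^\eps$ from Definition~\ref{21.5.2013.1} is indispensable: it produces $\eps$-independent moment bounds on $M^\eps_T$, allowing all thresholds to be chosen as functions of $\delta$ alone.
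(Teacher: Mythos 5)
Your proof is correct, but it takes a genuinely different route from the paper's. The paper first establishes the dichotomy only at the terminal time: strict concavity gives $U(\wt{X}^{\wt{\eta}^\eps}_T)\le U(x_0)+U'(x_0)(\wt{X}^{\wt{\eta}^\eps}_T-x_0)+f(\delta)1_{\{|\wt{X}^{\wt{\eta}^\eps}_T-x_0|\ge\delta\}}$ with $f(\delta)<0$, and the drift bound $E[\wt{X}^{\wt{\eta}^\eps}_T]\le x_0+K\eps^{1-\vt}$ then forces either $E[U]<U(x_0)$ or $P(|\wt{X}^{\wt{\eta}^\eps}_T-x_0|\ge\delta)=O(\eps^{1-\vt})$; this terminal-time control is then upgraded to control of the running supremum in three further steps, via uniform integrability of $|\int_0^T\wt{\eta}^\eps_s\sigma_s\,dW_s|^p$ for $p\in(1,2)$, Doob's maximal inequality in $L^p$, and the uniform vanishing of the non-Brownian part of the wealth. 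You instead argue by contraposition on the running supremum directly: failure of uniform closeness together with the pathwise bound $\|D^\eps\|_\infty=O(\eps^{1-\vt})$ forces $E[(M^\eps_T)^2]\ge\delta^3/16$ via Doob, and local strong concavity of $U$ near $x_0$ (with a truncation at radius $K_0(\delta)$ justified by $\eps$-uniform fourth moments) converts this variance into a utility deficit of fixed size that the $O(\eps^{1-\vt})$ drift cannot offset. Your route is more quantitative --- it exhibits an explicit deficit $c_{K_0}\delta^3/64$ in the bad case and avoids the paper's four-step bootstrap --- at the price of needing higher moments of the martingale part: fourth moments for the truncation and exponential moments for the transfer term $E[U'(x_0+M^\eps_T-\|D^\eps\|_\infty)]$, which are genuinely required since $U'$ grows exponentially on the negative half-line by \eqref{eq:marginal}. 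These bounds do hold uniformly in $\eps$, but your parenthetical justification ``from the uniform boundedness of $\langle M^\eps\rangle_T$'' is incomplete for a discontinuous martingale; you should also invoke that the jumps of $M^\eps$ are of size $O(\eps)$, since a bounded predictable bracket alone does not yield exponential moments in the presence of jumps. This is a one-line repair rather than a gap; the paper proves the analogous exponential bound \eqref{eq:L2} by a Dol\'eans-Dade argument.
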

\begin{proof}
{\em Step 1:} Let $\delta>0$. Strict concavity of $U$ implies
\beam\label{27.3.2014.1}
U(\wt{X}^{\widetilde{\eta}^\varepsilon}_T) & \le & U(x_0)+U'(x_0)(\wt{X}^{\widetilde{\eta}^\varepsilon}-x_0) 
+\left[U(x_0+\delta)-U(x_0)-U'(x_0)\delta\right]1_{\{\wt{X}^{\widetilde{\eta}^\varepsilon}_T\ge x_0+\delta\}}\nonumber\\
& & \qquad\qquad  + \left[U(x_0-\delta)-U(x_0)+U'(x_0)\delta\right]1_{\{\wt{X}^{\widetilde{\eta}^\varepsilon}_T\le x_0-\delta\}}
\eeam
and
\beao
0>\max\left\{U(x_0+\delta)-U(x_0)-U'(x_0)\delta,\ U(x_0-\delta)-U(x_0)+U'(x_0)\delta\right\}=:f(\delta).
\eeao
By Assumption~\ref{22.5.2013.1}, there exists $K\in\bbr_+$ such that $E\left(\wt{X}^{\widetilde{\eta}^\varepsilon}\right) \le x_0+K \varepsilon^{1-\vartheta}$ for all 
$\varepsilon\in(0,1)$. Together with (\ref{27.3.2014.1}), this yields
\beao
E\left[U(\wt{X}^{\widetilde{\eta}^\varepsilon}_T)\right] \le U(x_0)+U'(x_0)K\eps^{1-\vt} + P\left(\widetilde{X}^{\widetilde{\eta}^\eps}_T \not\in (x_0-\delta,x_0+\delta)\right) f(\delta).
\eeao
As a result, the expected utility either lies below $U(x_0)$ or 
\beam\label{26.3.2014}
P\left(\widetilde{X}^{\widetilde{\eta}^\eps}_T \not\in (x_0-\delta,x_0+\delta)\right) \le 
\frac{U'(x_0)K \eps^{1-\vt}}{-f(\delta)}. 
\eeam
Since the right-hand side of (\ref{26.3.2014}) tends to zero as $\eps\to 0$, this already proves the assertion at the terminal time $t=T$.
In the remaining three steps, we show how to extend the assertion to all intermediate times $t \in [0,T]$ in a uniform manner.

{\em Step 2:} Instead of $\wt{X}^{\widetilde{\eta}^\varepsilon}_t$, we first consider the processes $x_0+\int_0^t \widetilde{\eta}_s^{\varepsilon}\sigma_s\,dW_s$, $\varepsilon \in (0,1)$. 
They are true martingales and 
$$
\sup_{\eps\in(0,1)}E\left[\left(\int_0^T \widetilde{\eta}_s^{\varepsilon}\sigma_s\,dW_s\right)^2\right]<\infty.
$$ 
Therefore the family $(|\int_0^T \widetilde{\eta}_s^{\varepsilon}\sigma_s\,dW_s|^p)_{\eps\in(0,1)}$ is uniformly integrable for any $p\in(1,2)$. As a consequence, for every $\xi>0$
there exists a $\delta>0$ such that 
\beam\label{31.3.2014.1}
P\left(\left|\int_0^T \widetilde{\eta}_s^{\varepsilon}\sigma_s\,dW_s\right|\ge\delta\right)\le \delta\quad \implies\quad  
E\left(\left|\int_0^T \widetilde{\eta}_s^{\varepsilon}\sigma_s\,dW_s\right|^p\right)\le \xi
\eeam
for every $\varepsilon \in (0,1)$. By Doob's maximal inequality, 
\beam\label{31.3.2014.2}
E\left(\sup_{t\in[0,T]}\left|\int_0^t \widetilde{\eta}_s^{\varepsilon}\sigma_s\,dW_s\right|^p\right) 
\le \left(\frac{p}{p-1}\right)^p E\left(\left|\int_0^T \widetilde{\eta}_s^{\varepsilon}\sigma_s\,dW_s\right|^p\right).
\eeam
From (\ref{31.3.2014.1}) and (\ref{31.3.2014.2}), we conclude that 
\beao
& & \forall \xi>0\ \exists \delta>0\ \forall \eps\in(0,1)\quad P\left(\left|\int_0^T \widetilde{\eta}_s^{\varepsilon}\sigma_s\,dW_s\right|\ge \delta\right)\le \delta\\
& & \implies\quad P\left(\sup_{t\in[0,T]}\left|\int_0^t \widetilde{\eta}_s^{\varepsilon}\sigma_s\,dW_s\right|\ge \xi\right)\le \xi. 
\eeao

{\em Step 3:} Let us show that the processes $\wt{X}^{\widetilde{\eta}^\varepsilon}_t -x_0 - \int_0^t \widetilde{\eta}_s^{\varepsilon}\sigma_s\,dW_s$ tend to zero uniformly 
in probability for
$\eps\to 0$ (see (\ref{21.8.2013.1}) for the difference of $\wt{R}_t$ and $\int_0^t \sigma_s\,dW_s$). 
By (\ref{label14}), the fact that $\eps_t$ tends linearly to zero, and the uniform boundedness of $\widetilde{\eta}^\eps_t$, the $dN^i_t$- and 
$dt$-terms of $\wt{X}^{\widetilde{\eta}^\varepsilon}_t$ converge to zero in the total variation distance for $\eps\to 0$. 
The same holds for the integrals with respect to $\langle W,\eps\rangle_t$ and the integrals 
with respect to the drift part of $\eps_t$.
To show convergence ``uniformly in probability'' of the integrals with respect to the continuous martingale part of $\eps_t$, we again use the arguments of Step~2.\\

{\em Step 4:} Now, we complete the proof of the lemma by combining the assertions of the previous three steps. 
Let $\xi>0$. By Step~2, there  exist a $\delta\in(0,\xi)$ s.t. for all $\eps\in(0,1)$ the implication
\beam\label{23.1.2015.2}
P\left(\left|\int_0^T \widetilde{\eta}_s^{\varepsilon}\sigma_s\,dW_s\right|\ge \delta\right)\le \delta\quad
\implies\quad P\left(\sup_{t\in[0,T]}\left|\int_0^t \widetilde{\eta}_s^{\varepsilon}\sigma_s\,dW_s\right|\ge \xi/2\right)\le \xi/2
\eeam
holds. 
By Step~3, there exists $\wt{\eps}>0$ s.t.
\beam\label{23.1.2015.3}
P\left(\sup_{t\in[0,T]}\left|\wt{X}^{\widetilde{\eta}^\varepsilon}_s -x_0 - \int_0^t \widetilde{\eta}_s^{\varepsilon}\sigma_s\,dW_s\right|\ge \delta/2\right)
\le \delta/2,\quad \forall \eps\in(0,\wt{\eps}). 
\eeam
In addition, by the triangle inequality, one has
\beam\label{23.1.2015.4}
P\left(\left|\int_0^T \widetilde{\eta}_s^{\varepsilon}\sigma_s\,dW_s\right|\ge \delta\right)
\le P(|\wt{X}^{\widetilde{\eta}^\varepsilon}_T-x_0|\ge \delta/2) +  P\left(\left|\wt{X}^{\widetilde{\eta}^\varepsilon}_T -x_0 
- \int_0^T \widetilde{\eta}_s^{\varepsilon}\sigma_s\,dW_s\right|\ge \delta/2\right).
\eeam
By Step 1, there exists $\wh{\eps}>0$ s.t. for all $\eps\in(0,\wh{\eps})$
\beao
P(|\wt{X}^{\widetilde{\eta}^\varepsilon}_T-x_0|\ge \delta/2) \le \delta/2\quad \mbox{or}\quad E\left[U\left(\widetilde{X}^{\widetilde{\eta}^\eps}_T\right)\right]< U(x_0).
\eeao
Now, let $\eps\in(0,\wt{\eps}\wedge\wh{\eps})$. Either one has $E\left[U\left(\widetilde{X}^{\widetilde{\eta}^\eps}_T\right)\right]< U(x_0)$ or, by (\ref{23.1.2015.4}),
(\ref{23.1.2015.3}), one can apply 
implication~(\ref{23.1.2015.2}) to conclude that $P\left(\sup_{t\in[0,T]}\left|\int_0^t \widetilde{\eta}_s^{\varepsilon}\sigma_s\,dW_s\right|\ge \xi/2\right)\le \xi/2$.
Together with (\ref{23.1.2015.3}), $\delta\le \xi$, and again the triangle inequality, one arrives at 
\beao
P\left(\sup_{t\in[0,T]}\left|\wt{X}^{\widetilde{\eta}^\varepsilon}_s -x_0\right|\ge \xi\right)\le \xi. 
\eeao

\end{proof}

\begin{Bemerkung}
Lemma~\ref{27.3.2014.3} asserts that, for small $\eps$, the wealth process of a policy either remains uniformly close to the initial position or the policy is ``extremely 
bad'' in the sense that the corresponding expected utility is smaller than the obtained by not trading the risky asset at all. 

Starting from an arbitrary uniformly bounded family of policies~$(\widetilde{\eta}^\varepsilon)_{\eps\in(0,1)}$, we may replace $\widetilde{\eta}^\varepsilon$ by $0$ 
for all $\eps$ for which $E[U(\widetilde{X}^{\widetilde{\eta}^\eps}_T)]< U(x_0)$. Then, the modified family of wealth processes performs at least as well as the original one, and Lemma~\ref{27.3.2014.3} implies that the modified family converges to $x_0$ uniformly in probability for $\eps\to 0$. Henceforth, we therefore assume this property already for
$(\wt{X}^{\widetilde{\eta}^\varepsilon})_{\eps\in(0,1)}$ without loss of generality.
\end{Bemerkung}

Let us now compute the expected utility obtained by applying such a family of policies~$(\widetilde{\eta}^\varepsilon)_{\eps\in(0,1)}$.

By Assumption~\ref{22.5.2013.1}, the integrals with respect to $\varepsilon_t$ and $\langle W, \varepsilon \rangle_t$ in \eqref{21.8.2013.1} are dominated 
for $\eps\to 0$. Indeed, the continuous martingale parts are dominated by $\sigma_t\,dW_t$ and the drift terms are dominated by the drifts of the integrals with respect to the 
counting processes $N^i_t$, which are of order $2\eps^{1-\vt}\mathcal{E}_t\lambda^i_t$. Hence, these terms can be safely neglected in the sequel.

It\^o's formula as in 
\cite[Theorem I.4.57]{js.03} and \cite[Theorem II.1.8]{js.03} yield:
\begin{align}
U(\widetilde{X}^{\widetilde{\eta}^\varepsilon}_T)=U(x_0) &+\int_0^T U'(\widetilde{X}^{\widetilde{\eta}^\varepsilon}_{t-})\widetilde{\eta}^\varepsilon_{t} \sigma_t dW_t+\frac{1}{2} \int_0^T U''(\widetilde{X}^{\widetilde{\eta}^\varepsilon}_{t-}) (\widetilde{\eta}^\varepsilon_{t})^2 \sigma^2_t dt  \label{eq:U}\\
&+\left(U(\widetilde{X}^{\widetilde{\eta}^\varepsilon}_{-}+\widetilde{\eta}^\varepsilon x)-U(\widetilde{X}^{\widetilde{\eta}^\varepsilon}_{-})\right)*(\mu^{\widetilde{R}}
-\nu^{\widetilde{R}})_T
+\left(U(\widetilde{X}^{\widetilde{\eta}^\varepsilon}_{-}+\widetilde{\eta}^\varepsilon x)-U(\widetilde{X}^{\widetilde{\eta}^\varepsilon}_{-})\right)*\nu^{\widetilde{R}}_T,\notag
\end{align}
where $\mu^{\widetilde{R}}$ is the jump measure of $\wt{R}$ (see e.g. \cite[Proposition II.1.16]{js.03}) and $\nu^{\widetilde{R}}$ its compensator in the sense 
of \cite[Theorem II.1.8]{js.03}.
The integrals with respect to the Brownian motion $W_t$ and the compensated random measure $\mu^{\widetilde{R}}-\nu^{\widetilde{R}}$ are true martingales. To see this, first consider 
the Brownian integral. By \eqref{eq:marginal} we have 
\beao
U'(\widetilde{X}^{\widetilde{\eta}^\varepsilon}_{t-})\le C \exp(-c_2\widetilde{X}^{\widetilde{\eta}^\varepsilon}_{t-})1_{\{\widetilde{X}^{\widetilde{\eta}^\varepsilon}_{t-}<0\}}
+ U'(0) 1_{\{\widetilde{X}^{\widetilde{\eta}^\varepsilon}_{t-}\ge 0\}}
\eeao 
with constants $C, c_2>0$. Therefore and due to the boundedness of $\widetilde{\eta}^\varepsilon_t \sigma_t$, it suffices to show that
\begin{equation}\label{eq:L2}
E\left[\int_0^T \exp(-2c_2\widetilde{X}^{\widetilde{\eta}^\varepsilon}_t)dt\right]<\infty.
\end{equation}
By the Doleans-Dade exponential formula \cite[Theorem I.4.61]{js.03} and \cite[Theorem II.1.8]{js.03}, we have
\begin{align}
\exp(-2 c_2\widetilde{X}^{\widetilde{\eta}^\varepsilon}_t) =& \exp(-2c_2 x_0) 
\E\left(-2c_2 \int_0^\cdot \widetilde{\eta}_s^\varepsilon \sigma dW_s+\left(\exp(-2c_2 \widetilde{\eta}^\varepsilon x)-1\right)*\left(\mu^{\widetilde{R}}-\nu^{\widetilde{R}}\right)\right)_t \label{eq:DD}\\
&\qquad \times \exp\left(\int_0^t (2 c_2^2 (\widetilde{\eta}^\varepsilon_s)^2 \sigma_s^2)ds +(\exp(-2c_2 \widetilde{\eta}^\varepsilon x)-1)*\nu^{\widetilde{R}}_t\right).\notag
\end{align}
For all $t \in [0,T]$, the ordinary exponential in the above representation is uniformly bounded by a single constant. This is because $\sigma^2_t$ as well as 
$(\widetilde{\eta}^\varepsilon_t)^2$ are both uniformly bounded and, since the intensities $\varepsilon^{-\vartheta}\lambda^{(1)}_t,\varepsilon^{-\vartheta}\lambda^{(2)}_t$ are bounded for any $\varepsilon>0$, the same holds for the jump part (for sufficiently small $\varepsilon$):
\begin{align*}
\left(\exp(-2c_2 \widetilde{\eta}^\varepsilon x)-1\right)*\nu^{\widetilde{R}}_t= &\int_0^t  \left( \exp(-2c_2 \widetilde{\eta}^\varepsilon_s 
2\varepsilon_s/(1-\varepsilon_s))-1\right) 1_{A^{(1)}_s} \varepsilon^{-\vartheta} \lambda^{(2)}_s ds\\
&+\int_0^t \left( \exp(2c_2 \widetilde{\eta}^\eps_s 2\varepsilon_s/(1+\varepsilon_s))-1\right) 1_{A^{(2)}_s} \varepsilon^{-\vartheta} \lambda^{(1)}_s ds.
\end{align*} 
\eqref{eq:L2} now follows since the stochastic exponential in \eqref{eq:DD} is not only a local martingale, but also a supermartingale with decreasing expectation because it is positive for sufficiently small $\varepsilon$.  

The argument for the integral with respect to the compensated random measure $\mu^{\widetilde{R}}-\nu^{\widetilde{R}}$ in \eqref{eq:U} is similar. 
By the mean value theorem, \eqref{eq:marginal}, and \cite[Theorem II.1.33]{js.03} it suffices to show
$$E\left[\exp(-2c_2 \widetilde{X}^{\widetilde{\eta}^\eps}) * \nu^{\widetilde{R}}_T\right] < \infty.$$
But this follows verbatim as for the Brownian integral above, again using that the jumps of $\widetilde{R}_t$ as well as the corresponding jump intensities are all uniformly bounded. 
In summary, \eqref{eq:U} therefore gives\footnote{If the integrals with respect to $\varepsilon_t$ and $\langle W,\varepsilon \rangle_t$ are taken into account explicitly, these only lead to an additional higher-order term that can be bounded by a constant times $\varepsilon \widetilde{\eta}_t^\varepsilon$ for small $\varepsilon$.}
\beam\label{2.9.2013.1}
& & E[U(\widetilde{X}^{\widetilde{\eta}^\varepsilon}_T)]-U(x_0)\nonumber\\
& & =E\left[\int_0^T \frac{1}{2} U''(\widetilde{X}^{\widetilde{\eta}^\varepsilon}_{t-}) (\widetilde{\eta}^\varepsilon_{t})^2  \sigma_{t}^2 dt 
+ (U(\widetilde{X}^{\widetilde{\eta}^\varepsilon}_- + \widetilde{\eta}^\varepsilon x)-U(\widetilde{X}^{\widetilde{\eta}^\varepsilon}_-))*\nu^{\widetilde{R}}_T\right]\nonumber\\
& & \le E\Bigg[\int_0^T \Bigg( \left(\frac{1}{2} U''(\widetilde{X}^{\widetilde{\eta}^\varepsilon}_{t-}) (\widetilde{\eta}_{t}^\varepsilon)^2  \sigma_{t}^2 
+ U'(\widetilde{X}^{\widetilde{\eta}^\varepsilon}_{t-})\widetilde{\eta}_t^\varepsilon \frac{2\varepsilon_t}{1-\varepsilon_t}\varepsilon^{-\vartheta}\lambda^{(2)}_t \right)1_{A^{(1)}_t}\nonumber\\
& & \qquad\qquad\qquad +\left(\frac{1}{2} U''(\widetilde{X}^{\widetilde{\eta}^\varepsilon}_{t-}) (\widetilde{\eta}_{t}^\varepsilon)^2  \sigma_{t}^2 
-U'(\widetilde{X}^{\widetilde{\eta}^\varepsilon}_{t-})\widetilde{\eta}^\varepsilon_t \frac{2\varepsilon_t}{1+\varepsilon_t}\varepsilon^{-\vartheta}\lambda^{(1)}_t\right) 
1_{A^{(2)}_t}\Bigg)dt\Bigg],
\eeam
where for the inequality we have used the concavity of $U$ and inserted the definition of $\widetilde{R}_t$.
By (\ref{eq:marginal}), (\ref{eq:L2}) also shows that the random variables in (\ref{2.9.2013.1}) are integrable. Moreover, for any uniformly bounded family of 
policies~$\wt{\eta}_t^{\varepsilon}$,
\beam\label{20.8.2013.3}
\int_0^T U'(\widetilde{X}^{\wt{\eta}^{\varepsilon}}_t)\,dt\quad\mbox{and}\quad
\int_0^T U''(\widetilde{X}^{\wt{\eta}^{\varepsilon}}_t)\,dt\qquad \mbox{are uniformly integrable for\ }\eps\in(0,\varepsilon_0),
\eeam
where $\varepsilon_0>0$ is a sufficiently small constant. Indeed, it follows from the proof of (\ref{eq:L2}) that the bound therein holds uniformly in $\eps\in(0,\varepsilon_0)$. Then, using Jensen's inequality, we observe that 
$\left(\int_0^T U'(\widetilde{X}^{\wt{\eta}^{\varepsilon}}_t)\,dt\right)^2$, $\left(\int_0^T U''(\widetilde{X}^{\wt{\eta}^{\varepsilon}}_t)\,dt\right)^2$ are uniformly bounded 
in expectation, which in turn yields \eqref{20.8.2013.3}

For fixed wealth $\widetilde{X}^{\widetilde{\eta}^\varepsilon}_{t-}$, the integrand in the upper bound of (\ref{2.9.2013.1})
is a quadratic function in the policy $\widetilde{\eta}^\varepsilon_t$. 
Plugging in the pointwise maximizer~$\frac{2 \varepsilon^{1-\vartheta}\mathcal{E}_t\lambda^{(2)}_t}{\mathrm{ARA}(\widetilde{X}^{\widetilde{\eta}^\varepsilon}_{t-})
\sigma^2_t(1-\varepsilon\mathcal{E}_t)}1_{A^{(1)}_t}
-\frac{2 \varepsilon^{1-\vartheta}\mathcal{E}_t\lambda^{(1)}_t}{\mathrm{ARA}(\widetilde{X}^{\widetilde{\eta}^\varepsilon}_{t-})\sigma^2_t(1+\varepsilon\mathcal{E}_t)}1_{A^{(2)}_t}$,
which is of order $O(\eps^{1-\vartheta})$ (uniformly in $\omega,t$) by \eqref{eq:ARA},
therefore yields the following upper bound:\footnote{If the integrals with respect to $\varepsilon_t$ and $\langle W,\varepsilon \rangle_t$ are taken into account explicitly, this does not change the pointwise optimizer and the corresponding upper bound at the leading order.}
\begin{align*}
&E[U(\widetilde{X}^{\widetilde{\eta}^\varepsilon}_T)]-U(x_0)\\
&\qquad  \leq \int_0^T E\left[ \left(-\frac{U'(\widetilde{X}^{\widetilde{\eta}^\varepsilon}_t)^2}{U''(\widetilde{X}^{\widetilde{\eta}^\varepsilon}_t)}
\frac{2\varepsilon^{2(1-\vartheta)}\mathcal{E}^2_t(\lambda^{(2)}_t)^2}{\sigma^2_t}\right)1_{A^{(1)}_t}
+\left(-\frac{U'(\widetilde{X}^{\widetilde{\eta}^\varepsilon}_t)^2}{U''(\widetilde{X}^{\widetilde{\eta}^\varepsilon}_t)}
\frac{2\varepsilon^{2(1-\vartheta)}\mathcal{E}^2_t(\lambda^{(1)}_t)^2}{\sigma^2_t}\right)
1_{A^{(2)}_t}\right]dt\\
&\qquad \qquad+o(\varepsilon^{2(1-\vartheta)}).
\end{align*}
Here, we used $2\varepsilon_t/(1\mp \varepsilon_t)= \mp 2\varepsilon_t+O(\varepsilon^2)$ and that, by (\ref{20.8.2013.3}), the remainder is uniformly bounded in expectation. 

For the family of feedback policies
\beam\label{2.5.2013.1}
\widetilde{\eta}^{\varepsilon,*}_t=\frac{2 \varepsilon^{1-\vartheta}\mathcal{E}_t\lambda^{(2)}_t}{\mathrm{ARA}(\widetilde{X}^{\widetilde{\eta}^{\varepsilon,*}}_{t-})\sigma^2_t}1_{A^{(1)}_t}
-\frac{2 \varepsilon^{1-\vartheta}\mathcal{E}_t \lambda^{(1)}_t}{\mathrm{ARA}(\widetilde{X}^{\widetilde{\eta}^{\varepsilon,*}}_{t-})\sigma^2_t}1_{A^{(2)}_t}
\eeam
that converges uniformly to zero  as $\eps\to 0$,
this inequality becomes an equality at the leading order $\varepsilon^{2(1-\vartheta)}$, namely:
\begin{align}
&E[U(\widetilde{X}^{\widetilde{\eta}^{\varepsilon,*}}_T)]-U(x_0)\label{eq:Uopt}\\
& \qquad  =  E\Bigg[\int_0^T \Bigg( \left(\frac{1}{2} U''(\widetilde{X}^{\widetilde{\eta}^{\varepsilon,*}}_{t-}) (\widetilde{\eta}_{t}^{\varepsilon,*})^2  \sigma_{t}^2 
+ U'(\widetilde{X}^{\widetilde{\eta}^{\varepsilon,*}}_{t-})\widetilde{\eta}_t^{\varepsilon,*} 
\frac{2\varepsilon_t}{1-\varepsilon_t}\varepsilon^{-\vartheta}\lambda^{(2)}_t \right)1_{A^{(1)}_t}\nonumber\\
&\qquad\qquad\qquad +\left(\frac{1}{2} U''(\widetilde{X}^{\widetilde{\eta}^{\varepsilon,*}}_{t-}) (\widetilde{\eta}_{t}^{\varepsilon,*})^2  \sigma_{t}^2 
-U'(\widetilde{X}^{\widetilde{\eta}^{\varepsilon,*}}_{t-})\widetilde{\eta}^\varepsilon_t \frac{2\varepsilon_t}{1+\varepsilon_t}\varepsilon^{-\vartheta}\lambda^{(1)}_t\right) 
1_{A^{(2)}_t}\Bigg)dt\Bigg]+o(\varepsilon^{2(1-\vartheta)})\nonumber\\
&\qquad  = \int_0^T E\left[ \left(-\frac{U'(\widetilde{X}^{\widetilde{\eta}^{\varepsilon,*}}_t)^2}{U''(\widetilde{X}^{\widetilde{\eta}^{\varepsilon,*}}_t)}
\frac{2\varepsilon^{2(1-\vartheta)}\mathcal{E}^2_t(\lambda^{(2)}_t)^2}{\sigma^2_t}\right)1_{A^{(1)}_t}
+\left(-\frac{U'(\widetilde{X}^{\widetilde{\eta}^{\varepsilon,*}}_t)^2}{U''(\widetilde{X}^{\widetilde{\eta}^{\varepsilon,*}}_t)}
\frac{2\varepsilon^{2(1-\vartheta)}\mathcal{E}_t^2(\lambda^{(1)}_t)^2}{\sigma^2_t}\right)1_{A^{(2)}_t}\right]dt\nonumber\\
&\qquad\qquad\qquad +o(\varepsilon^{2(1-\vartheta)}).\notag
\end{align}
Here, the first equality follows from the mean value theorem because the differential remainder is bounded by 
$C |U'(\widetilde{X}^{\widetilde{\eta}^{\varepsilon,*}}_{t-}+\xi)-U'(\widetilde{X}^{\widetilde{\eta}^{\varepsilon,*}}_{t-})|\varepsilon^{2(1-\eta)}$ for some constant $C>0$, not depending
on $\eps$ as $\widetilde{\eta}^{\varepsilon,*}/\eps^{1-\vartheta}$ is bounded for $\eps\to 0$ by \eqref{eq:ARA}, and some 
bounded random variable $\xi$ which tends to $0$ pointwise for $\eps\to 0$. 
With (\ref{20.8.2013.3}) it follows that the term is uniformly integrable, so that the remainder 
is indeed of order $o(\varepsilon^{2(1-\eta)})$.

As a result:
\begin{equation}\label{eq:Udiff}
E[U(\widetilde{X}^{\widetilde{\eta}^\varepsilon}_T)]-E[U(\widetilde{X}^{\widetilde{\eta}^{\varepsilon,*}}_T)] \leq \varepsilon^{2(1-\vartheta)}M \int_0^T E\left[ \left|\frac{U'(\widetilde{X}^{\widetilde{\eta}^{\varepsilon,*}}_t)^2}{U''(\widetilde{X}^{\widetilde{\eta}^{\varepsilon,*}}_t)}-\frac{U'(\widetilde{X}^{\widetilde{\eta}^\varepsilon}_t)^2}{U''(\widetilde{X}^{\widetilde{\eta}^\varepsilon}_t)}\right|\right]dt +o(\varepsilon^{2(1-\vartheta)}),
\end{equation}
where the constant $M$ is a uniform bound for $2\mathcal{E}_t^2(\lambda^{(2)}_t)^2/\sigma^2_t$ and $2\mathcal{E}_t^2(\lambda^{(1)}_t)^2/\sigma^2_t$. 

With $\widetilde{X}^{\widetilde{\eta}^{\varepsilon}}\to x_0$, we also have   
$U'(\widetilde{X}^{\widetilde{\eta}^{\varepsilon}})^2/U''(\widetilde{X}^{\widetilde{\eta}^{\varepsilon}}) \to U'(x_0)^2/U''(x_0)$ uniformly in probability 
as $\varepsilon \to 0$. As above, by (\ref{20.8.2013.3}) we have uniform integrability, so that this convergence in fact holds in $L^1$. Hence, \eqref{eq:Udiff} and the dominated convergence theorem 
for Lebesgue integrals yield
\beam\label{31.3.2014.3}
E[U(\widetilde{X}^{\widetilde{\eta}^\varepsilon}_T)] \leq E[U(\widetilde{X}^{\widetilde{\eta}^{\varepsilon,*}}_T)] +o(\varepsilon^{2(1-\vartheta)}),
\eeam
that is, the family $(\widetilde{\eta}_t^{\varepsilon,*})_{\varepsilon>0}$ is approximately optimal at the leading order $\varepsilon^{2(1-\vartheta)}$.

Together with \eqref{eq:Uopt}, the same argument also yields that the corresponding leading-order optimal utility is given by 
\begin{align*}
E[U(\widetilde{X}^{\widetilde{\eta}^{\varepsilon,*}}_T)]&= U(x_0) -\frac{U'(x_0)^2}{2U''(x_0)}  E\left[\int_0^T (\widetilde{\eta}^{\varepsilon,*}_t)^2 d\langle \widetilde{R} \rangle_t\right]+o(\varepsilon^{2(1-\vartheta)})\\
&= U\left(x_0+\frac{\varepsilon^{2(1-\vartheta)}}{\mathrm{ARA}(x_0)}  E\left[\int_0^T \left(\frac{2\mathcal{E}_t^2 (\lambda^{(2)}_t)^2}{\sigma_t^2}1_{A^{(1)}_t}+\frac{2\mathcal{E}_t^2 (\lambda^{(1)}_t)^2}{\sigma_t^2}1_{A^{(2)}_t}\right) dt\right]\right)+o(\varepsilon^{2(1-\vartheta)}),
\end{align*}
where the second equality follows from Taylor's theorem and the definition of $\wt{\eta}^{\varepsilon,*}_t$. 

If all the model parameters $\lambda^i_t,\sigma_t,\varepsilon_t$ are constant, the integrals in this formula can be computed explicitly. Indeed, since $P[A^{(1)}_t]=1-P[A^{(2)}_t]=\lambda^{(1)}/(\lambda^{(1)}+\lambda^{(2)})$, it then follows that
$$
E[U(\widetilde{X}^{\widetilde{\eta}^{\varepsilon,*}}_T)]= U\left(x_0+\frac{2\lambda^{(1)} \lambda^{(2)}}{\mathrm{ARA}(x_0) \sigma^2}\varepsilon^{2(1-\vartheta)}T
\right)+o(\varepsilon^{2(1-\vartheta)}).
$$

\subsection{Proof of the Main Result}\label{sec:proof2}

We now complete the proof of our main result. To this end, we use that the policy $\beta_t^\varepsilon$ proposed in Section~\ref{sec:main} is uniformly close to the almost optimal 
policy $\wt{\eta}^{\varepsilon,*}_t$ in the shadow market with price process $\wt{S}_t$ by Lemma \ref{lemma_boundary}. Since trading in the frictionless shadow market is at least 
as favorable as in the original limit order market, and the policy $\beta^\eps_t$ trades at the same prices in both markets, 
this in turn yields the leading-order optimality of $\beta_t^\varepsilon$.

\begin{proof}[Proof of Theorem~\ref{main_theorem}]
Let
\beao
\eta^\eps_t:=\beta^\eps_t\left((1-\eps_t)1_{\{\beta^\eps_t>0\}} + (1+\eps_t)1_{\{\beta^\eps_t<0\}}\right), 
\eeao
where $\beta_t^\eps$ is the solution of \eqref{label1}, which is of order $O(\eps^{1-\vartheta})$ uniformly in $\omega,t$. Note that $\eta_t^\eps$ is the risky position of the policy $\beta_t^\eps$ if the risky asset is valued at the shadow price $\wt{S}_t$ instead of the mid price $S_t$. 
 
{\em Step 1:} We want to compare the $\wt{S}_t$-wealth of $\eta_t^\eps$ with the wealth of the approximate optimizer $\wt{\eta}_t^{\eps,*}$ in the $\wt{S}_t$-market 
defined in (\ref{2.5.2013.1}). 

By (\ref{20.8.2013.3}), in the expansion~(\ref{2.9.2013.1}), one can replace $U'(\wt{X}^{\wt{\eta}^\eps}_t)$ and $U''(\wt{X}^{\wt{\eta}^\eps}_t)$ 
by $U'(x_0)$ and $U''(x_0)$, respectively, leading to a remainder of order $o(\eps^{2(1-\vt)})$,
for any family of policies~$\wt{\eta}_t^\eps$ with the property that $\wt{\eta}_t^\eps/\eps^{1-\vt}$ is uniformly bounded.   
Applied to $\eta^{\varepsilon}_t$ and $\widetilde{\eta}_t^{\varepsilon,*}$, this yields
\beam\label{2.5.2013.2}
& &  E[U(\widetilde{X}^{\eta^{\varepsilon}}_T)]-E[U(\widetilde{X}^{\widetilde{\eta}^{\varepsilon,*}}_T)]\nonumber\\
& &  =  E\Bigg[\int_0^T \Bigg( \left(\frac{1}{2} U''(x_0) (\eta_t^\varepsilon)^2  \sigma_{t}^2 
+ U'(x_0)\eta_t^\varepsilon 
\frac{2\varepsilon_t}{1-\varepsilon_t}\varepsilon^{-\vartheta}\lambda^{(2)}_t \right)1_{A^{(1)}_t}\nonumber\\
& &  \quad\quad +\left(\frac{1}{2} U''(x_0) (\eta_t^\varepsilon)^2  \sigma_{t}^2 
-U'(x_0)\eta^\varepsilon_t \frac{2\varepsilon_t}{1+\varepsilon_t}\varepsilon^{-\vartheta}\lambda^{(1)}_t\right) 
1_{A^{(2)}_t}\Bigg)dt\Bigg]\nonumber\\
& &  \quad\quad - 
E\left[ \int_0^T\left(-\frac{U'(x_0)^2}{U''(x_0)}
\frac{2\varepsilon^{2(1-\vartheta)}\mathcal{E}^2_t(\lambda^{(2)}_t)^2}{\sigma^2_t}\right)1_{A^{(1)}_t}
+\left(-\frac{U'(x_0)^2}{U''(x_0)}
\frac{2\varepsilon^{2(1-\vartheta)}\mathcal{E}^2_t(\lambda^{(1)}_t)^2}{\sigma^2_t}\right)1_{A^{(2)}_t}dt\right]\nonumber\\
& & \quad\quad
+o(\varepsilon^{2(1-\vartheta)})\nonumber\\
& &  = E\Bigg[\int_0^T \frac{1}{2} U''(x_0) \sigma_{t}^2  \left((\eta_t^\varepsilon-\ov{\beta}1_{A^{(1)}_t} - \un{\beta}1_{A^{(2)}_t})^2 
+ (\wt{\eta}_t^{\varepsilon,*}-\ov{\beta}1_{A^{(1)}_t} - \un{\beta}1_{A^{(2)}_t})^2\right) dt\Bigg]\\
& & \quad\quad+o(\varepsilon^{2(1-\vartheta)}).\nonumber
\eeam
By Lemma~\ref{lemma_boundary} and since $\eta_t^\eps-\beta_t^\eps = O(\eps^{2-\vt})$, we have
$(\eta_t^\eps-\ov{\beta}_t 1_{A_t^{(1)}} - \un{\beta}_t 1_{A_t^{(2)}})/\eps^{1-\vartheta}\to 0$ after the first jump of $(N_t^{(1)},N_t^{(2)})$, uniformly in probability. The same holds for 
$\wt{\eta}_t^{\eps,*}$. As the expectation of the first jump time of $(N_t^{(1)},N_t^{(2)})$ is of order $O(\eps^\vartheta)$
(and the integrands in the last line of (\ref{2.5.2013.2}) are uniformly of order $O(\varepsilon^{2(1-\vartheta)})$), this gives
\begin{align*}
E\Bigg[\int_0^T \frac{1}{2} U''(x_0) \sigma_{t}^2  \left((\eta_t^\varepsilon-\ov{\beta}1_{A^{(1)}_t} - \un{\beta}1_{A^{(2)}_t})^2 
+ (\wt{\eta}_t^{\varepsilon,*}-\ov{\beta}1_{A^{(1)}_t} - \un{\beta}1_{A^{(2)}_t})^2\right) dt\Bigg] &= o(\varepsilon^{2(1-\vartheta)}) + O(\varepsilon^{2-\vartheta}) \\
&= o(\varepsilon^{2(1-\vartheta)}).
\end{align*}

{\em Step 2:} Let $(\psi^{0,\eps},\psi^\eps)_{\eps \in (0,1)}$ be an arbitrary admissible family of portfolio processes in the limit order market 
with $(\psi^{0,\eps}_0,\psi^\eps_0)=(x_0,0)$. 
By (\ref{24.5.2013.1}) and Step~1 in the proof of  \cite[Proposition~1]{kuehn.stroh.10}, we have 
\beam\label{24.5.2013.2}
\psi^{0,\eps}_t + \psi^\eps_t 1_{\{\psi^\eps_t\ge 0\}} (1-\eps_t)S_t + \psi^\eps_t 1_{\{\psi^\eps_t< 0\}} (1+\eps_t)S_t \le x_0 + \int_0^t \psi^\eps_s d\wt{S}_s.
\eeam
Due to the boundedness of $\wt{S}/S$, admissibility in the sense of Definition~\ref{21.5.2013.1} implies that 
the family~$(\psi^\eps\wt{S})_{\eps\in(0,1)}$ is uniformly bounded. Thus, we can apply (\ref{31.3.2014.3}), 
i.e. the family is dominated by the feedback policies~$\wt{\eta}^{\eps,*}$.

Now take the strategy~(\ref{20.8.2013.1}). For the corresponding portfolio process~$(\varphi^{0,\eps}_t,\varphi^\eps_t)_{t \in [0,T]}$  
we have $\eta_t^\eps = \varphi^\eps_t \wt{S}_{t-}$ and -- by construction of the strategy and $\wt{S}_t$ -- \eqref{24.5.2013.2} holds with equality 
for $(\psi_t^{0,\eps},\psi^\eps_t)=(\varphi_t^{0,\eps},\varphi_t^\eps)$. Together with (\ref{2.5.2013.2}) and the approximate optimality 
of $\widetilde{\eta}_t^{\varepsilon,*}$ in the $\wt{S}_t$-market, this yields the assertion. 
\end{proof}

\section{Price Impact of Exogenous Orders}\label{sec:impact}

In our model, bid and ask prices remain unaffected by the execution of incoming orders. This is the most optimistic scenario for liquidity providers because -- modulo inventory risk -- it allows them to earn the full spread between alternating buy and sell trades. Disregarding price impact is reasonable for small noise traders whose orders do not carry any information. For strategic and possibly informed counterparties, however, it is questionable. For these, prices are expected to rise after purchases and fall after sales, respectively (compare, e.g., \cite{glosten.milgrom.85,madhavan.al.97}). Similarly, larger orders of other market participants also move market prices in the same directions by depleting the order book \cite{obizhaeva.wang.13}. 

Our basic model can be extended to incorporate the price impact of incoming orders in reduced form.\footnote{This is similar in spirit to the 
Almgren-Chriss model \cite{almgren.chriss.01} from the optimal execution literature, in that we also do not attempt to specify the dynamics of the whole order book, 
but instead directly model the price moves caused by executions.}  Indeed, suppose that the mid price follows
\beam\label{20.1.2015.1}
\frac{dS_t}{S_{t-}} = \sigma_t\,dW_t -\kappa\eps_t\,dN^{(1)}_t + \kappa\eps_t\,dN^{(2)}_t,
\eeam
for some price impact parameter $\kappa\in[0,1)$. With the information~$\scr{F}_{t-}$, our small investor is allowed to place limit buy orders at the bid price~$(1-\eps_t)S_{t-}$ and
limit sell orders at the ask price~$(1+\eps_t)S_{t-}$ immediately before the jump of $S_t$. However, bid and ask prices jump down after exogenous sell orders arrive at 
the jump times of $N_t^{(1)}$, and jump up after exogenous 
buy orders arrive at the jump times of $N_t^{(2)}$. (Note that this happens irrespective of the liquidity our \emph{small} investor chooses to provide.)
Formally, this means that the self-financing condition~(\ref{eq:sf2}) is replaced by\footnote{For the integrals with respect to $M^B_t$ and $M^S_t$ see  
(\ref{20.1.2015.2}). Since $M^B_t-M^B_{t-}$, $M^S_t-M^S_{t-}$ have to be predictable and since, by the assumptions on $N^{(1)}, N^{(2)}$, the jump times 
of (\ref{20.1.2015.1}) are totally inaccessible stopping times, market orders are actually always executed at $(1\pm\eps_t)S_t$.}
\begin{align*}
\varphi^0_t  =  x_0 &-\int_0^t ((1+\eps_s)S_{s-},(1+\eps_s)S_s)\,dM^B_s+\int_0^t ((1-\eps_s)S_{s-}, (1-\eps_s)S_s)\,dM^S_s\\
 &-\int_0^{t-}L^B_s(1-\eps_s)S_{s-}\,dN^{(1)}_s+\int_0^{t-}L^S_s(1+\eps_s)S_{s-}\,dN^{(2)}_s.
\end{align*}
The parameter $\kappa$ represents the fraction of the half-spread $\varepsilon_t S_{t-}$ by which prices are moved.\footnote{Note that, as in \cite{cartea.jaimungal.13}, price impact is permanent here. Tracking an exogenous benchmark in a limit order market with \emph{transient} price impact as in \cite{obizhaeva.wang.13} is studied by \cite{horst.naujokat.13}.} $\kappa=0$ corresponds to the model without price impact studied above. 
Conversely, $\kappa \approx 1$ leads to a model in the spirit of Madhavan et al.\ \cite{madhavan.al.97}, where liquidity providers do not earn the spread, 
but only a small exogenous compensation for their services.\footnote{Indeed, after a successful execution of a limit order the mid price jumps close to the 
limit price of the order if $\kappa \approx 1$. This means that the liquidity provider actually trades at similar prices as in a frictionless market with price process $S_t$. 
If moreover $\alpha^{(1)}_t=\alpha^{(2)}_t$, the mid price is still a martingale and expected profits vanish.} 

In the model, the liquidity provider does not internalize the price impact and therefore continues to post liquidity at the best bid and ask prices. This assumption is made for tractability. Indeed, the motivation for this 
restriction of the considered limit prices is not as compelling as in the basic model with continuous bid-ask prices. Alternatively, as in \cite{cartea.jaimungal.13}, 
one can consider models where the size of the liquidity provider's order is fixed but it can be posted deeper in the order book to mitigate the adverse price impact.\footnote{Also compare \cite{cartea.jaimungal.14}, where two types of models are discusses. In the first one, one can post one limit order for one share with an arbitrary limit price. In the second, limit prices are fixed at the best bid-ask prices, but volume can be arbitrary.} Incorporating strategic decisions concerning order size \emph{and} location in a tractable manner is a challenging direction for future research.

In the above extension of our model, the optimal policy is similar to the one in the baseline version without price impact. One still trades to some position limits
$\underline{\beta}_t,\overline{\beta}_t$ whenever limit orders are executed. However, since the adverse effect of price impact diminishes the incentive to provide liquidity,
$\underline{\beta}_t,\overline{\beta}_t$ are reduced accordingly. If executions move bid and ask prices by a fraction $\kappa$ of the current half-spread~$\varepsilon_t S_{t-}$, then
\begin{equation}\label{eq:boundariesimpact}
\overline{\beta}_t = \frac{2\varepsilon_t ((1-\frac{\kappa}{2})\alpha^{(2)}_t-\frac{\kappa}{2}\alpha^{(1)}_t)}{\mathrm{ARA}(x_0) \sigma^2_t}, \quad \underline{\beta}_t = -\frac{2\varepsilon_t ((1-\frac{\kappa}{2})\alpha^{(1)}_t-\frac{\kappa}{2}\alpha^{(2)}_t)}{\mathrm{ARA}(x_0) \sigma^2_t},
\end{equation}
given that $(1-\frac{\kappa}{2})\alpha^{(2)}_t -\frac{\kappa}{2} \alpha^{(1)}_t$ and $(1-\frac{\kappa}{2})\alpha^{(1)}_t - \frac{\kappa}{2}\alpha^{(2)}_t$ are positive. In the symmetric case
$\alpha^{(1)}_t=\alpha^{(2)}_t=\alpha_t$, i.e., if buy and sell orders arrive at the same rates, this holds if and only if $\kappa<1$. In this case,
$$\overline{\beta}_t=\frac{2\varepsilon_t (1-\kappa)\alpha_t}{2 \mathrm{ARA}(x_0) \sigma^2_t}, \quad \underline{\beta}_t=-\frac{2\varepsilon_t (1-\kappa)\alpha_t}{2 \mathrm{ARA}(x_0) \sigma^2_t},
$$
so that price impact equal to a fraction $\kappa$ of the current half-spread $\varepsilon_t S_{t-}$ simply reduces liquidity provision by a factor of $1-\kappa$. In particular, if $\kappa \approx 1$, then the boundaries can be of order $o(\varepsilon^{1-\vartheta})$. As a result, arrival rates of a higher order than $\varepsilon^{-\vartheta}, \vartheta \in (0,1)$ can be used without implying nontrivial profits as the spread collapses to zero. In any case, the formula \eqref{eq:welfare} for the corresponding leading-order certainty equivalent remains the same after replacing the trading boundaries accordingly.

In addition to reducing the target positions for limit order trades, price impact also alters the rebalancing strategy between these. Recall that price impact increases bid-ask prices 
after the liquidity provider has sold the risky asset, and decreases them after purchases. Hence, immediately starting to trade by market orders to keep the inventory 
in $[\underline{\beta}_t,\overline{\beta}_t]$ is not optimal anymore, since this would more than offset the gains from the previous limit order transactions. 
To circumvent this, one can instead focus solely on limit orders, and ensure admissibility by liquidating the portfolio with market orders and stopping trading altogether if the risky 
position exits the bigger interval~$[2\un{\beta}_t,2\ov{\beta}_t]$. In the limit for small spreads and frequent limit order executions, the probability for this event tends to zero, 
so that the utility loss due to premature liquidation is negligible at the leading order, and the corresponding policy turns out to be optimal.\footnote{The same modification could 
also have been used in the baseline model without price impact. There, however, the exact optimal strategy keeps the inventory between $\underline{\beta}_t,\overline{\beta}_t$ by market 
orders in simple settings \cite{kuehn.stroh.10}, so that we stick to a strategy of that type there.} 

Let us sketch how the arguments from Section \ref{sec:proofs} can be adapted to derive these results. Again, construct a frictionless shadow price process $\widetilde{S}_{t}$, for which the optimal strategy trades at the same prices as in the limit order market. Define
\beao
\wt{S}_t = (1-\eps_t) S_{t-} = \frac{1-\eps_t}{1-\kappa\eps_t}S_t\quad\mbox{if } \Delta N^{(1)}_t>0,
\quad\mbox{and}\quad\wt{S}_t = (1+\eps_t) S_{t-} 
= \frac{1+\eps_t}{1+\kappa\eps_t}S_t\quad\mbox{if } \Delta N^{(2)}_t>0,
\eeao
and assume that the quotient $\wt{S}_t/S_t$ is piecewise constant between the jump time of $N_t^{(1)}, N_t^{(2)}$. These properties are satisfied by the solution of
\beam\label{14.7.2013.1}
\frac{d\wt{S}_t}{\wt{S}_{t-}} & = & \sigma_t dW_t + 1_{A^{(1)}_t}\left(\left(\frac{1+\varepsilon_t}{1-\varepsilon_t}(1-\kappa\eps_t)-1\right)\,dN^{(2)}_t - \kappa\eps_t\,dN^{(1)}_t\right.\nonumber\\
& & \left. + \frac{\kappa-1}{(1-\eps_t)(1-\kappa\eps_t)}\,d\eps_t
+ \frac{(\kappa-1)\kappa}{(1-\eps_t)(1-\kappa\eps_t)^2}\,d\langle \eps,\eps\rangle_t
 + \frac{(\kappa-1)\sigma_t}{(1-\eps_t)(1-\kappa\eps_t)}d\langle W,\eps\rangle_t\right)\nonumber\\
& & \qquad\quad  +1_{A^{(2)}_t}\left(\left(\frac{1-\varepsilon_t}{1+\varepsilon_t}(1+\kappa\eps_t)-1\right)\,dN^{(1)}_t + \kappa\eps_t\,dN^{(2)}_t \right.\nonumber\\
& & \left. + \frac{1-\kappa}{(1+\eps_t)(1+\kappa\eps_t)}\,d\eps_t
+ \frac{(\kappa-1)\kappa}{(1+\eps_t)(1+\kappa\eps_t)^2}\,d\langle\eps,\eps\rangle_t + \frac{(1-\kappa)\sigma_t}{(1+\eps_t)(1+\kappa\eps_t)}d\langle W,\eps\rangle_t\right)
\eeam
with $\wt{S}_0:=(1+\eps_0)(1+\kappa \eps_0)^{-1}S_0$. Here, the terms in the second and fourth line of (\ref{14.7.2013.1}) ensure that $\wt{S}_t$ coincides with 
$(1-\eps_t)(1-\kappa\eps_t)^{-1}S_t$ on $A^{(1)}_{t+}$ and with
$(1+\eps_t)(1+\kappa\eps_t)^{-1}S_t$ on $A^{(2)}_{t+}$. For constant $\eps_t$ these terms disappear. 
As without price impact, they do anyhow not contribute at the leading order for $\eps\to 0$.

This frictionless price process matches the execution prices of limit orders in the original limit order market, as limit orders are executed at their limit prices 
which are fixed {\em before} orders are executed. However, the corresponding jumps due to price impact -- which occur simultaneously with executions in the limit order market -- are only accounted for at the next trade in the frictionless shadow market. Hence, market orders to manage the investor's inventory -- which naturally consist of sales after limit order purchases and vice versa -- can be carried out at \emph{strictly} more favorable price with $\wt{S}_t$. Hence, trading $\wt{S}_t$ is generally strictly more favorable than the original limit order market, and equally favorable only for limit order trades.

As in Section \ref{sec:proof1}, one verifies that a risky position that oscillates between $\underline{\beta}_t,\overline{\beta}_t$ at the jump times of $N_t^{(1)},N_t^{(2)}$ is optimal at the
leading order for $\wt{S}_t$. Similarly as in Section \ref{sec:proof2}, one then checks that the same utility can be obtained in the original limit order market by using the policy proposed
above. Indeed, the corresponding limit order trades are executed at the same prices as for $\wt{S}_t$. For the potential liquidating trade by market orders, there is a single additional
loss of order $O(\eps^{2-\vartheta})=o(\varepsilon^{2(1-\vartheta)})$, which is negligible at the leading order $O(\varepsilon^{2(1-\vartheta)})$. 
The utility lost due to terminating trading early is of order $O(\varepsilon^{2(1-\vartheta)})$, because it is bounded by its counterpart for $\wt{S}_t$, and it follows similarly as 
in the proof of Lemma~\ref{lemma_boundary} that the probability for a premature termination tends to zero as $\varepsilon \to 0$. As a result, the total utility loss due to early 
termination is therefore also not visible at the leading order $O(\varepsilon^{2(1-\vartheta)})$. In summary, the policy proposed above matches the optimal utility in the superior 
frictionless market $\wt{S}_t$ at the leading order, and is therefore optimal at the leading order in the original limit order market as well.

\bibliographystyle{abbrv}
\bibliography{tractrans}

\end{document}